
\documentclass[twocolumn]{article}




\usepackage{graphics} 
\usepackage{epsfig} 
\usepackage{textcomp}
\usepackage{times} 
\usepackage{amsmath} 
\usepackage{amsthm}
\refstepcounter{equation} 
\usepackage{amssymb}  
\usepackage{mathtools}
\usepackage{adjustbox}
\usepackage{xfrac }

\usepackage{pgf, tikz}
\usetikzlibrary{shapes.geometric}
\newtheorem{lemma}{Lemma}
\newtheorem{theorem}{Theorem}
\newtheorem{remark}{Remark}

\newtheorem{assumption}{Assumption}
\newtheorem{definition}{Definition}

\usepackage{color,soul} 
\soulregister\cite7
\soulregister\ref7
\soulregister\pageref7

\begin{document}

\title{
Plug-and-Play Model Predictive Control for Load Shaping and Voltage Control in Smart Grids}

\author{{ Caroline Le Floch$^{\text{a}}$, Somil Bansal$^{\text{b}}$, 
Claire J. Tomlin$^{\text{b}}$, Scott Moura$^{\text{a}}$, Melanie Zeilinger$^{\text{c}}$}
\thanks{This work was funded in part by the Siebel Energy Institute, the Swiss National Science Foundation and the National Science Foundation within the Division of Electrical, Communications and Cyber Systems under Grant 1408107.}
         \thanks{$^{\text{a}}$Civil and Environmental Engineering,
				University of California, Berkeley,
				United States, {\small \tt caroline.le-floch@berkeley.edu}}
         \thanks{$^{\text{b}}$Electrical Engineering and Computer Sciences,
				University of California, Berkeley,
				United States}
	   \thanks{$^{\text{c}}$Mechanical and Process Engineering,
				ETH Zurich,
				Switzerland}
	 }


\maketitle


\begin{abstract}
This paper presents a predictive controller for handling plug-and-play (P\&P) charging requests of flexible loads in a distribution system. We define two types of flexible loads: (i) deferrable loads that have a fixed power profile but can be deferred in time and (ii) shapeable loads that have flexible power profiles but fixed energy requests, such as Plug-in Electric Vehicles (PEVs). The proposed method uses a hierarchical control scheme based on a model 
predictive control (MPC) formulation for minimizing the global system cost. The first stage computes a reachable 
reference that trades off deviation from the nominal voltage with the required generation control. The second stage uses a price-based objective to aggregate flexible loads and provide load shaping services, while satisfying system constraints and users' preferences at all times.
It is shown that the proposed controller is recursively feasible under specific conditions, i.e. the flexible load demands are satisfied  and bus voltages remain within the desired limits. Finally, the proposed scheme is illustrated on a 55 bus radial distribution network.
\end{abstract}

\section{INTRODUCTION} \label{section:intro}
The development of smart meters has led to the modernization of distribution networks by enabling real-time bidirectional communication \cite{wang2011survey, fan2013smart}. In this context of smart grids, demand side management allows system operators to control the energy consumption at the household level, offering new opportunities to improve the reliability, efficiency and sustainability of the grid \cite{palensky2011demand, mohsenian2010autonomous, Gellings2009}. In particular, automated load shifting is expected to play a key role in stabilizing the grid in the case of high penetration of Plug in Electric vehicles (PEVs) and uncontrollable renewable sources, such as photovoltaic solar energy \cite{LeFloch2015Distributed, Lund2008}. 

PEVs provide a compelling opportunity for supplying demand-side management services in the smart grid. For example, a vehicle-to-grid (V2G) capable PEV can communicate with the grid, can store energy, and can transfer energy back to the electric grid when required  \cite{richardson2013electric, kempton2005vehicle, Langton2013}. Other flexible commercial and residential loads such as thermostats, controllable lighting or dishwashers can be deferred to adapt their consumption according to the distribution grid constraints. In this article, we consider two types of services for controlling these flexible loads. On a localized and short-time scale, the controller provides voltage regulation at distribution buses. On a longer time scale, the controller aggregates and schedules flexible loads to reduce peak consumption and shape daily load curves. The goal of this paper is to develop a control scheme that integrates flexible loads in the existing distribution network, provides local and aggregated grid services and satisfies users$'$ requirements.

There are two key challenges in designing such a controller. First, it should be able to handle variations in the number of connected loads, i.e. plugging in and out operations. In a real scenario, a user can request to connect or disconnect any load at any desired time and bus. Modeling all the possible requests results in a very large scale and uncertain system, which can be studied with distributed algorithms and model predictive control (MPC)  \cite{delfino2014multilevel}. In contrast, we model only the static state of the network and deal with plug-in requests when new loads require supply. This will modify the overall distribution system, and the current load schedule may be infeasible and/or unstable for the modified system, requiring a controller redesign. Second, the controller should consider two different objectives and time scales: local voltage regulation and aggregate load shaping. Previous work has proposed multilevel and multi-horizon approaches \cite{delfino2014multilevel, shaaban2014real, lopes2011integration} and decentralized protocols \cite{mou2015decentralized} to address this issue. 

In this work, our goal is to schedule loads in real time to provide load shaping services while satisfying voltage constraints at each bus and time step. We define a two stage plug-and-play model predictive controller.
While the main focus of MPC so far has been on the control of networks with constant topology, the concept of plug and play (P\&P) MPC \cite{riverso2015plug, Melanie, Stoustrup2009}  considers network changes by subsystems that want to join or leave the network, while ensuring feasibility and stability of the global system. By providing an automatic redesign of the control laws in response to changing network conditions,
P\&P MPC is an attractive scheme for modern control systems of increasing complexity. We provide a P\&P framework
to deal with the connection and disconnection of loads from the grid, which requires an online feasibility handling as introduced in \cite{bansal2014plug}. A procedure for updating the controller together with a transition scheme is proposed, which prepares the system for the requested modifications. A schematic representation of the protocol is given in Fig. \ref{fig:schema_general}.

\begin{figure}[tb]
	\centering
    \includegraphics[page=2,trim = 21mm 18mm 10mm 37mm, clip, width = 0.5\textwidth]{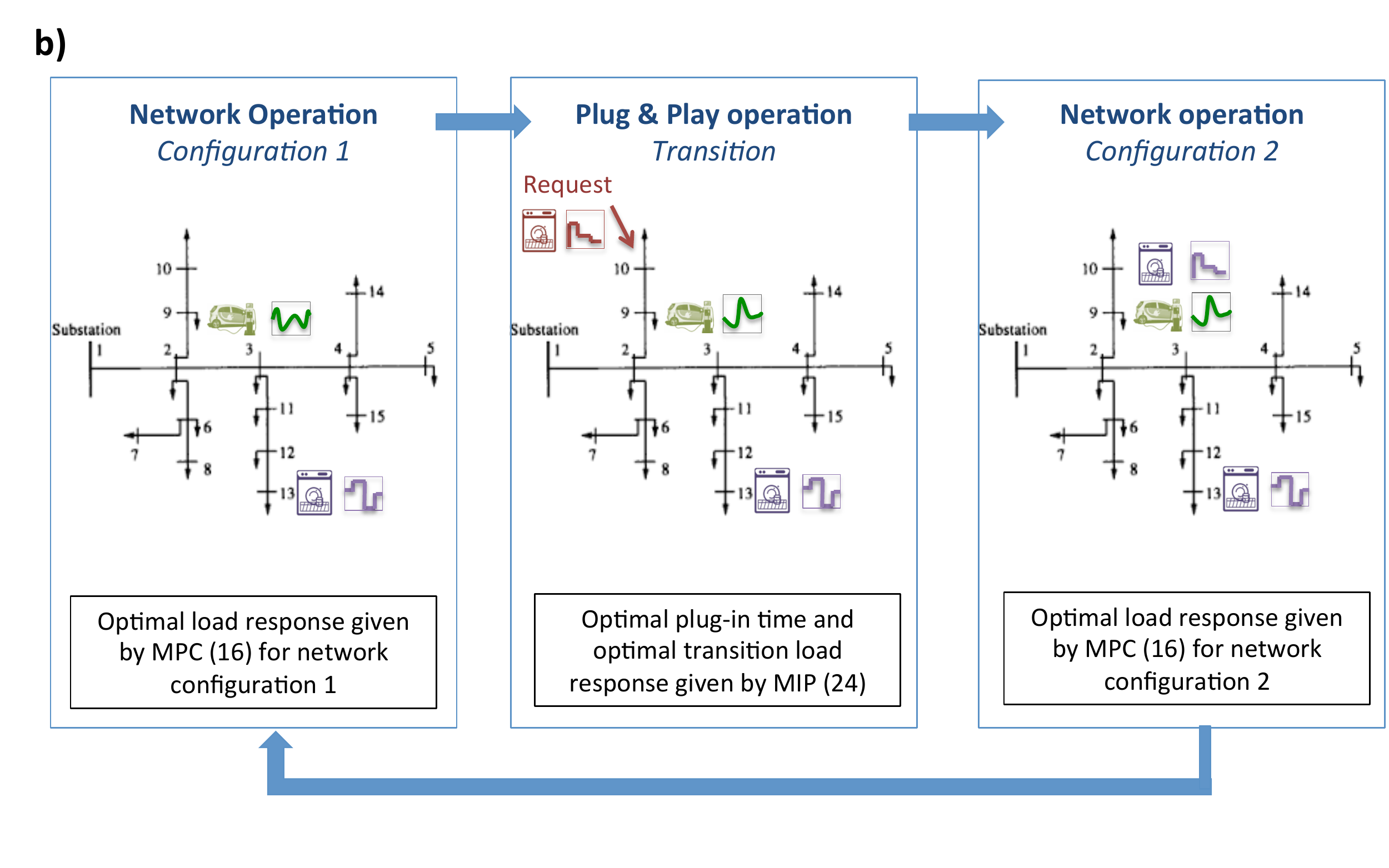}
    \includegraphics[trim = 2mm 16mm 0mm 0mm, clip, width = 0.5\textwidth]{Schema_gen.pdf}
	\centering
    \caption{Schematic representation of the protocol: (a) loads plugging in and out of the distribution network and (b) iterative process between network operations and plug \& play operations.}
	\label{fig:schema_general}
\end{figure} 
To summarize, the main contributions of this paper are:
\begin{itemize}
\item It presents a novel P\&P MPC scheme that optimally schedule loads to be connected and shape current loads while satisfying network constraints at all times. We model power flows in the distribution network using the Second Order Cone relaxation of the DistFlow equations (see \cite{farivar2013branch}) and improve the accuracy of the model compared to the linear approximation used in \cite{bansal2014plug}.
\item The controller is applicable to different types of loads. In particular, we define two types of loads: deferrable loads that can be delayed but have a fixed profile, and shapeable loads that have a flexible profile but need a fixed amount of energy. The proposed algorithm satisfies users' requirements by ensuring that every flexible load reaches the desired energy level at the desired time and any deferrable load demand is met before its deadline.
\end{itemize}

The paper is organized as follows: Section \ref{secn:prelim} introduces the system model. In Section \ref{secn:controller}, control objectives are defined and the hierarchical MPC controller is proposed. Section \ref{secn:P&P} provides an extension for handling plug and play requests. Section \ref{secn:example} presents numerical simulations demonstrating the advantages of the proposed control scheme and Section \ref{secn:conclusion} provides concluding remarks. 

\section{PRELIMINARIES} \label{secn:prelim}
In this section we introduce the different elements of the network, including deferrable and shapeable loads, battery banks and capacitors. We model the overall system as a constrained dynamic system with linear dynamics and SOC constraints. 
\subsection{Load modeling} \label{secn:load}
Let $e(t)$ be an initial energy profile. The role of the scheduling operator 
is to generate another energy profile $\widetilde{e}=\tau (e)$.
We distinguish 3 types of loads, which have different possible maps $\tau$ (see Fig. \ref{fig:schema_general}a):
\begin{itemize}
  \item \emph{Fixed loads} do not participate in demand response and cannot be controlled
  : $\widetilde{e}^f=e^f$.
  \item \emph{Deferrable loads} can be delayed but have a fixed load profile. In 
  this case $\widetilde{e}^{def}(t)=e^{def}(t-d)$ with $d$ bounded by a constraint on the maximum allowable
delay. This includes PEVs with constant charging rate.
\item \emph{Shapeable loads} have a flexible profile but need a fixed amount of energy 
in a fixed time period T:
 $\int_T  \widetilde{e}^{shp}= \int_T  e^{shp} dt$. This includes PEVs with continuous 
 charging rate.
\end{itemize}
In the following sections, we model the load dynamics.

\subsubsection{Deferrable loads}
{Consider a deferrable load with a power profile $c^{def}_0 >0$.} As soon as the deferrable load plugs in, its power profile  $c^{def}$ becomes deterministic:
\begin{subequations}
\begin{eqnarray} \label{eqn:def0}
  e^{def}(k+1)&=&e^{def}(k) + \eta^{def} \Delta T c^{def}(k)\\
c^{def}(k)&=&c^{def}_0(k)
\end{eqnarray}
\end{subequations}
where $\Delta T$ is the sampling time and $\eta^{def}<1$ is an efficiency coefficient.
\begin{remark} \label{defferable}
  The only control on these loads comes from P\&P operations, which determines when to plug-in the load. After it is connected, the load is deterministic and can be considered as a fixed load. 
  \end{remark}
 {~}\\
  
\subsubsection{Shapeable loads}$\mbox{  }$\\
Power at shapeable loads $c^{shp}$ can take values in a continuous range $[0, 
c^{shp}_{max}]$. We define $e^{shp}_{low}$, $e^{shp}_{max}$ and $e^{shp}_{des}$ as the physical lower limit, physical upper limit and desired State Of Charge (SOC) of the load, respectively. 
The dynamics of shapeable loads are given by:
\begin{subequations}
 \begin{eqnarray} \label{eqn:shp}
  e^{shp}(k+1)&=&e^{shp}(k) + \eta \Delta T c^{shp}(k)\\
  e^{shp}_{min}(k) \leq & e^{shp} &\leq e^{shp}_{max}\\
0\leq & c^{shp} &\leq c^{shp}_{max} \label{eqn:shp2}
\end{eqnarray}
\end{subequations}
where $\eta<1$ is an efficiency coefficient.
We assume that the load has to be fully charged
by time $k^{out}$: $e^{shp}(k^{out}\geq e^{shp}_{des} $. We ensure that this is feasible by imposing the time varying
lower bound constraint on the SOC value:
\begin{equation}
  e^{shp}_{min}(k)= e^{shp}_{des} - (k^{out}-k )c^{shp}_{max} \Delta T \eta~~~~~~~~\forall k\leq k^{out}.
  \label{eq:SOC_feas}
\end{equation}
We can combine Eq (\ref{eq:SOC_feas}) with the physical lower limit $e^{shp}_{low}$ and reformulate:
\begin{equation} \label{eq:shapeableconstraint}
e^{shp}_{min}(k)=\max [e^{shp}_{des}\mbox{ - }\max(0,\small{(k^{out}\mbox{ - }k )}{c^{shp}_{max}}{\eta \Delta T}), e^{shp}_{low}].
\end{equation}
In the remainder of this paper we use the notation:
\begin{center}
$ 1_{k< k^{out}}=  \begin{cases}
    1  &  \text{if $ k<k^{out}$} \\
    0  & \text{otherwise} \\
  \end{cases}
$
\end{center}

\subsection{Battery banks}
We model on-site batteries with a linear state space model:
\begin{subequations}
\begin{eqnarray} 
 \label{eqn:def}
e^{bat}(k+1)&=&e^{bat}(k) + \Delta T p^{bat}\\
e^{bat}_{low} \leq & e^{bat} &\leq e^{bat}_{max}\\
 p^{bat}_{min}  \leq & p^{bat} &\leq  p^{bat}_{max} 
\end{eqnarray}
\end{subequations}
where $e^{bat}_{low}$, $e^{bat}_{max}$ are the fixed physical lower and upper limits of the battery's SOC, and  $p^{bat}_{min}$, $p^{bat}_{max}$ are the minimum and maximum power. This model assumes perfect battery efficiency, which is a simplified approximation of the battery dynamics, but frequently used in the power system literature to formulate the overall system as a linear dynamical system, and to simplify the resulting control scheme (\cite{he2012optimal}, \cite{bansal2014plug}, 
 \cite{Hu2014Coordinated}).
\subsection{Network Model}

We consider a radial distribution network, which is a structure commonly used in the power systems literature. To characterize the power flow in this network we adopt the DistFlow equations first introduced in \cite{Baran1989} and the notation introduced in \cite{Farivar2013}, restated here for completeness. 
\begin{table}[ht]

\caption{Variables for a radial distribution network}
\centering
\begin{adjustbox}{max width=0.5\textwidth}
\begin{tabular}{|lcl|}
\hline
$\mathcal{N}$ &  & Set of buses, $\mathcal{N} := \{1,\ldots,n\}$ \\
$\mathcal{L}$ & & Set of lines between the buses in $\mathcal{N}$ \\ 
$\mathcal{L}_i$ &  & Set of lines connecting bus $0$ to bus $i$ \\ 
$p_i^l$ & & Real power consumption by fixed loads at bus $i$\\
$p_i^{bat}$ & & Real power consumption by battery banks at bus $i$\\
$p_i^{shp}$ & & Real power consumption by shapeable loads at bus $i$\\
$p_i^{def}$ & & Real power deferrable loads at bus $i$\\
$q_i^l, q_i^g$ & & Reactive power consumption and generation at bus $i$\\
$r_{ij}, x_{ij}$ & & Resistance and reactance of line $(i,j) \in \mathcal{L}$\\
$P_{ij}, Q_{ij}$ & & Real and reactive power flows from bus $i$ to $j$\\
$v_i$ & & Voltage magnitude at bus $i$\\
$l_{ij}$ & & Squared magnitude of complex current from bus $i$ to $j$\\
$M_i$ & & Number of shapeable loads connected at bus $i$\\
\hline
\end{tabular}
\label{table:notations}
\end{adjustbox}
\end{table}

The power flow equations for a radial distribution network can be written as the following DistFlow equations \cite{Baran1989_2}:
%
\begin{subequations} \label{eqn:orig_PF}
\begin{equation}\label{eqn:orig_PF1}
P_{ij} =   p_{j}^{l} + p_{j}^{bat} +p_{j}^{def}+p_{j}^{shp}+ r_{ij}l_{ij} + \sum\limits_{k:(j,k) \in \mathcal{L}} P_{jk}
\end{equation}
\begin{equation}\label{eqn:orig_PF2}
Q_{ij}   =  q_{j}^{l}  - q_{j}^{g} + x_{ij}l_{ij} + \sum\limits_{k:(j,k) \in \mathcal{L}} Q_{jk}
\end{equation}
\begin{equation}\label{eqn:orig_PF3}
v_{j}^2  = v_{i}^2 + (r_{ij}^2 + x_{ij}^2)l_{ij} - 2(r_{ij}P_{ij} + x_{ij}Q_{ij})
\end{equation}
\begin{equation} \label{eq:non_convex}
l_{ij}v_{i}^2  =  P_{ij}^2 + Q_{ij}^2
\end{equation}
\begin{equation*}
\forall j \in \mathcal{N}\backslash\lbrace{1}\rbrace \mbox{, and } (i,j) \in \mathcal{L}
\end{equation*}
\end{subequations}
 where $P_{ij}$, $Q_{ij}$, $v_j$ and $l_{ij}$ are defined in Table \ref{table:notations}. 
Because the above formulation is non-convex, we use the Second Order Cone relaxation (SOCP) defined in \cite{farivar2013branch}, where equation (\ref{eq:non_convex}) is relaxed as follows:
\begin{equation} \label{eq:convex}
l_{ij}  \geq \frac{P_{ij}^2 + Q_{ij}^2} {v_{i}^2}.
\end{equation}

 The variables 
are the reactive power generation input (column) vector $q^g := (q_1^g, \ldots, q_n^g) \in \mathbb{R}^n$, the 
battery input vector $p^{bat} := (p_1^{bat}, \ldots, p_n^{bat}) \in \mathbb{R}^n$ and the shapeable, deferrable and fixed 
loads: $p^{shp}$, $p^{def}$, $p^l$ $\in \mathbb{R}^n$, where $p_i^{shp} \in \mathbb{R}^+$ and $p_i^{def} \in \mathbb{R}^+$ denote the net shapeable loads and  
net deferrable loads charging at bus $i$, respectively. We denote $M_i^{shp}$ and $M_i^{def}$ the number of shapeable and deferrable
loads connected at bus i, respectively. These values can vary over time due to plugging and unplugging operations. This 
relates with the notation in Section \ref{secn:load} as follows:
\begin{eqnarray}
  p_i^{shp} &=& \sum\limits_{j=1}^{M_i^{shp}} c^{shp}_{j}\\
   p_i^{def} &= &\sum\limits_{j=1}^{M_i^{def}} c^{def}_{j}
\end{eqnarray}

We assume that the substation voltage $v_0$ is given and is constant. Furthermore, load profiles $p^l$ and $q^l$ 
are time-varying but their 24 hour forecast is assumed to be given.

\subsection{Network Constraints} 

Depending on the load, bus voltages can fluctuate significantly. For reliable operation of the distribution 
network it is required to maintain the bus voltages $v$ within a tight range around the nominal value $v_{nom}$ 
at all times (generally $5\% $ deviation): 
\begin{equation} \label{eqn:vol_constraint} v_{nom}-\Delta v_{max} \leq v \leq v_{nom}+\Delta v_{max}. \end{equation}
{We define the variable $\nu_i=v_i^2$ and write condition (\ref{eqn:vol_constraint}) as:}
\begin{equation} \label{eqn:vol constraint2} \nu_{min} \leq \nu_i \leq \nu_{max}. \end{equation}

In addition, there are inherent physical limitations on the capacitor control input, which is limited to:
\begin{equation} \label{eqn:gen constraint} q_{min} \leq q^g \leq q_{max}. \end{equation}

\subsection{Dynamic System} \label{sec:DynSys}
In this section, we represent the overall system as a constrained dynamic system with SOCs as states. 
Recalling that $p_i^{shp} = \sum\limits_{j=1}^{M^{shp}_i} c^{shp}_{j}$, we can write
$p^{shp}=K^{shp}u^{shp}$
where $u^{shp} := (c^{shp}_1,\ldots,c^{shp}_{M^{shp}_1},\ldots,c^{shp}_{M^{shp}})^T \in \mathbb{R}^{M^{shp}}$, 
and $M^{shp}$ is the total number of shapeable loads connected to the grid, i.e. $M^{shp} = \sum\limits_{j=1}^n M^{shp}_j$. Matrix $K^{shp} \in \mathbb{R}^{n \times M^{shp}}$ 
is defined such that:
\begin{equation*}
  K^{shp}_{ij}=  \begin{cases}
    1       &  \text{if shapeable load
 j is connected to bus i} \\
    0  & \text{otherwise } \\
  \end{cases}
  \end{equation*}

The overall system model is described as follows:
\begin{subequations}
 \label{eqn:sys1}
 \begin{align}\label{eqn:sys1-1}
&x(k+1)  =  Ax(k) + Bu(k)\\
\label{eqn:sys2}
&(x(k), u(k)) \in \mathcal{Z}_k 
\end{align}
\end{subequations}

where
\\
\begin{equation*}
\begin{array}{c}
 x = \begin{bmatrix} x_1, x_2 \end{bmatrix} ^T= \begin{bmatrix} (e^{shp}_1,\ldots,{e}^{shp}_{M^{shp}}) ,(e^{bat}_1\ldots,e^{bat}_n)\end{bmatrix}^T \\
 u = \begin{bmatrix} q^{g} ~ u^{shp} ~ p^{bat} \end{bmatrix}_{}^T 
\end{array}
\end{equation*}

\begin{equation*}
\begin{array}{cc}
A = I, & B = \begin{bmatrix} 0 &  \eta \Delta T & 0 \\ 0 & 0 &  \Delta T \end{bmatrix}\\

\end{array}
\end{equation*}
$
\begin{aligned}
\mathcal{Z}_k =\{ & (x(k), u(k), P_{ij}(k), Q_{ij}(k), \nu_i(k), l_{ij}(k)) :\\
 &P_{ij}(k) =   (p_{j}^{l}(k) \mbox{+}p_{j}^{bat}(k) \mbox{+}p_{j}^{def}(k)\mbox{+}p_{j}^{shp}(k))\\
 &~~~~~~~~~~~ + r_{ij}l_{ij} (k)\mbox{+}\sum_{l:(j,l) \in \mathcal{L}} P_{jl}(k),\\
 &Q_{ij}(k)   =  q_{j}^{l}  - q_{j}^{g} + x_{ij}l_{ij}(k) + \sum\limits_{l:(j,l) \in \mathcal{L}} Q_{jl}(k),\\
 &\nu_{j} (k) = \nu_{i}(k) + (r_{ij}^2 \mbox{+} x_{ij}^2)l_{ij}(k) \\
 &~~~~~~~~~~- 2[r_{ij}P_{ij}(k) \mbox{+ } x_{ij}Q_{ij}(k)],\\
 & l_{ij}  \geq \frac{P_{ij}(k)^2 + Q_{ij}(k)^2} {\nu_{i}(k)},\\
& \nu_{min}\leq\nu_i(k)\leq\nu_{max},\\
& e_{min}(k) \leq x(k) \leq e_{max}, \\
& p^{bat}_{min}  \leq  p^{bat} \leq  p^{bat}_{max}  \\
&q_{min} \leq q^g(k) \leq q_{max}, \\ 
&0 \leq u^{shp}(k) \leq c^{shp}_{max} \}.\\
\end{aligned} 
$

\section{Controller Design} \label{secn:controller}

In this section, we design a controller that captures three control objectives: 
\begin{itemize}
  \item  \emph{Peak reduction}: Smooth the aggregated power profile.
  \item\emph{User satisfaction}: Provide the desired energy to shapeable and deferrable loads.
  \item \emph{Voltage control}: Ensure that voltage deviation from nominal voltage remains within bounds. 
  
\end{itemize}

For the purpose of this section we assume that the number of loads connected to the grid is constant, i.e. no new loads are connected to 
or disconnected from the system. 
Plug and play connections are introduced in Section \ref{secn:P&P}. \\

\subsection{{Stage 1: Feasible reference}}
In the remainder of this paper, Equation (\ref{eqn:sys1}) with $p^{def}=0$, $u^{shp}=0$ refers to the 
dynamics with no deferrable and no shapeable loads.  We assume that the system has a feasible trajectory and has the following periodic property, with period $N_r$:
\begin{assumption} 
\label{as:feasible}
  There exists an initial value ${\hat{x}}_0 $ and a sequence of control inputs $\hat{u}(k)$,
  such that the corresponding sequence of states $\hat{x}(k)$ according to dynamics (\ref{eqn:sys1-1})  with $p^{def}=0$, $u^{shp}=0$ satisfies the constraints in (\ref{eqn:sys2}), i.e. $(\hat{x}(k), \hat{u}(k)) \in \mathcal{Z}_k$  for all $k\in \{0,...,N_r-1\}$.
\end{assumption}
\begin{assumption} 
\label{as:periodic}
  If problem $(\ref{eqn:sys1})$ with $p^{def}=0$, $u^{shp}=0$, $x(k)=x_0$ is 
  feasible at time $k$, then problem (\ref{eqn:sys1}) with $p^{def}=0$, $u^{shp}=0$, $x(k+N_r)=x_0$
  is feasible at time $k+N_r$.
\end{assumption}
In practice, Assumption \ref{as:feasible} means that the traditional control 
devices (battery banks and capacitors) are selected according to the 
traditional fixed loads $p^l$. When new loads, such as PEVs, are not plugged-in, 
traditional control devices have enough flexibility
to regulate voltage. Assumption \ref{as:periodic} means that if the 
problem is feasible at time $k$, then the problem with same initial state is feasible 
at time $k+N_r$, where the period $N_r$ is typically a day.
 Deferrable and shapeable loads increase power demand 
and voltage drop, which requires extra control capacity until the loads are fully 
charged. In this case, the problem with extra loads may not be feasible, requiring to solve the problem in a hierarchical way. First an optimal 
solution is computed for the system without extra loads, second this reference signal is used to formulate a model predictive controller
with the augmented system \cite{Bansai2014Plug}.
The optimization problem for computing the solution
with only fixed loads $(\hat{{x}}(k); \hat{{u}}(k))$ is referred to as \emph{stage-1} in this 
paper: 
\begin{equation*}
V_1(\tilde{x},\tilde{u}) = \sum\limits_{i=0}^{N_r-1} ||\tilde{u}(i)||^2_{T_1}+ ||\nu(i)-\nu_{nom}||^2_{T_2}.\\
\end{equation*}

\begin{subequations} \label{eq:stage1}
\begin{align} 
&(\hat{x}(k),\hat{u}(k)) := \operatornamewithlimits{argmin}\limits_{\tilde{x},\tilde{u}} 
V_1(\tilde{x},\tilde{u})\\
&~~~~~~\mbox{s.t.}~~ \tilde{x}(i+1)  =  A\tilde{x}(i) + B\tilde{u}(i)\\
& ~~~~~~~~~~~\tilde{x}(0)  =  A\tilde{x}(N_r-1) + B\tilde{u}(N_r-1) \label{eq:stage1:term}\\
& ~~~~~~~~~~~(\tilde{x}(i), \tilde{u}(i) )\in \mathcal{Z}_i;~~i = 0, \ldots, N_{r}-1 \label{eq:stage1:feas}\\
& ~~~~~~~~~~~M^{def}=M^{shp}=0 \nonumber
\end{align}
\end{subequations}
where $T_1$ and $T_2$ are respectively positive definite and positive semi-definite
weight matrices of appropriate dimensions. The terminal constraint (\ref{eq:stage1:term}) ensures that batteries recover their initial SOC at the end of the control horizon.
\begin{remark}
  At \emph{stage-1}, a cost function is chosen that
penalizes the generation control input and the deviation of voltage from its nominal value. 
We define $T_2$ to be positive semi-definite since tracking the nominal voltage improves 
the power quality for loads, but the constraint (\ref{eq:stage1:feas}) is enough to ensure that voltage remains between 
operational bounds at every time step  $i \in \{0, \ldots, N_{r}-1\}$.
\end{remark}

\subsection{Stage 2: Model Predictive Controller}
In the second stage, a predictive controller is designed to minimize the overall cost of the system for time steps in 
$\{0,..,N-1\}$, with $N< N_r$. {Problem \emph{stage-1} is computed once, at the begining of the horizon, and we make use of the corresponding solution to ensure that the \emph{stage-2} problem remains recursively feasible under a receding horizon strategy.} Let $\lambda(t)$ denote the price of electricity at time t. We assume that $\lambda(t)$ is given as an input to the MPC and reflects demand peaks and congestion in the grid.
We propose the following MPC
problem (referred to as \emph{stage-2} in this paper):
\begin{equation*}
V_{2,k}(x,u) = \sum\limits_{i=0}^{N-1} \lambda(i+k) \big(\sum_{j=1}^{M^{shp}} u^{shp}_j(i+k)\big) + ||\nu-\nu_{nom}||^2_{T_3} \\.
\end{equation*}

\begin{subequations} \label{eqn:opt2}
 \begin{align}
   \operatornamewithlimits{min}\limits_{{x}, {u}}~~~~~&   V_{2,k}(x,u)\\
    \mbox{s.t}~~~~~&  {x}(i+1+k)  =  A{x}(i+k) + B{u}(i+k)\\
     &x(k)  = {x}_k \\
  &({x}(i+k), {u}(i+k)) \in \mathcal{Z}_{i+k};~~i = 0, \ldots, N-1\\
 &x(N+k) \in \mathbb{X}_{N+k} \label{eqn:opt2_help1}
  \end{align}
\end{subequations}

  In the MPC problem (\ref{eqn:opt2}), the contribution from fixed loads $p^l$, and deferrable loads $p^{def}$ is uncontrollable, therefore
   load shaping can be achieved only by controlling $u^{shp}$. The weight matrix $T_3$ is positive semi-definite
 and can be used to penalize large voltage deviations from $v_{nom}$.

\begin{remark}
  The solution from \emph{stage-1} is used to define the terminal set $\mathbb{X}_{N+k}$
  in the next Section. Thus we choose the horizon time $N$ such 
  that $N<N_r$.
\end{remark}

\subsection{Terminal Set} \label{sec:TerminalSet}
In this section, we detail the terminal set (\ref{eqn:opt2_help1}). This 
constraint must ensure that the system has enough flexibility to charge 
shapeable and deferrable loads before their plug-out time $k^{out}$, and keep 
voltage between the regular bounds after the control horizon N. {To ensure this, the terminal set is defined such that the battery banks have enough energy to meet the real power demand of additional (shapeable and deferrable) loads at the end of horizon. Moreover, the capacitors supply the reactive power to satisfy the network constraints under base load. Feasibility of the \emph{stage-1} problem thus ensures that all network constraints are satisfied even with the additional loads.}

The terminal constraint for the shapeable loads\textquotesingle ~ SOC $x_1=(e^{shp}_1,\ldots,{e}^{shp}_{M^{shp}})^T$ should guarantee that each load can be fully charged before their plug-out time. This is given by Eq. 
(\ref{eq:shapeableconstraint}), which ensures 
the recursive feasibility of the constraint  $e_{min} (k)\leq x(k) \leq e_{max}$, 
where $e_{min} (k^{out})=e^{shp}_{des}$. 

The terminal constraint for
the battery banks\textquotesingle {~}SOC $x_2=(e^{bat}_1\ldots,e^{bat}_n)^T$ guarantees that batteries have enough 
capacity at time $N$
to track the reference signal from \emph{stage-1} and supply the additional deferrable and shapeable loads.
We denote $(\hat{q}_i,~ \hat{e}_i^{bat})$ the optimal solution to the \emph{stage-1} problem in (\ref{eq:stage1}),
and $k^{out}_{max}$ the maximum plug-out time of all the deferrable and shapeable loads that are currently connected to the grid.

\begin{definition} 
\label{def:term_set}

We define the terminal set $\mathbb{X}_{N+k}$ at time $N+k$ as follows:
$\mathbb{X}_{N+k} :=$
\begin{align*}
&[x_1 (N\mbox{+}k), x_2(N\mbox{+}k) ] ^T=[e^{shp}(N+k),e^{bat}(N+k)]^T \\
  &\mbox{such that} ~~\forall i \in \{1,...,n\} ~~\forall j \in \{1,...,M^{shp}\} : \\
 &e^{shp}_{j}(N+k) \geq
 e^{shp}_{des,j}\mbox{ - }\mbox{\small{max}} (0,{\small(k^{out}_{j}\mbox{ - }(N\mbox{+}k) \small)}{\eta\Delta T}{c^{shp}_{\small{max},j}})\\
  &e^{shp}_{j}(N+k) \geq e^{shp}_{low,j}\\
  &e^{shp}_{j}(N+k)\leq e^{shp}_{des,j}\\
  &{e}_j^{bat}(N\mbox{+ }k)= \hat{e}_i^{bat}(N\mbox{+ }k)+\sum\limits_{
  l=N+k}^{k^{out}_{max}}\Delta T {p}_i^{def}(l)+\tilde{p}_i^{shp}(l)
\end{align*}
where 
\begin{eqnarray}
\tilde{p}^{shp}(l)&=& K^{shp} \tilde{c}^{shp}(l) \label{eqn:term_def1}\\
\tilde{c}^{shp}_j(l)&=&\frac{{e}^{shp}_{des,j}- {e}^{shp}_{j}(N+k) }{(k^{out}_j-(N+k))\eta \Delta T}1_{l< k^{out}_j}\label{eqn:term_def2}
\end{eqnarray}
and $k^{out}_j$ is defined in Eq. (\ref{eq:SOC_feas}).

\end{definition}

\begin{lemma}\label{theorem:terminal}
If the following conditions hold:
$\forall l \in [N+k, k^{out}_{max}]$, $\forall i \in {1,...,n}$:
\begin{align}
&\hat{p_i}^{bat}(l)- p_i^{def}(l)- 
\tilde{p}^{shp}_i(l)\geq {p_i}^{bat}_{min} \label{eq:cond1}\\
&\hat{e}_i^{bat}(l)+\sum\limits_{m=l}^{k^{out}_{max}}\Delta T {p}_i^{def}(m) \label{eq:cond2}\\
& ~~~+({e}^{shp}_{des,i}- {e}^{shp}_{i}(N+k) ) \frac{(k^{out}_i-l) }{\eta (k^{out}_i-(N+k)) }\leq {e}_{max,i}^{bat} \nonumber 
\end{align}
Then problem (\ref{eqn:opt2}) with terminal set $\mathbb{X}_{N+k}$ as defined in Definition \ref{def:term_set}, is recursively feasible, i.e. if the MPC optimization problem is feasible for $x(k)$, then it is also feasible for $x(k+1)$ defined in Eq. (\ref{eqn:sys1}).
\end{lemma}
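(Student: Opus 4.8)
The plan is to establish recursive feasibility by the standard receding-horizon construction: starting from a feasible (optimal) solution of problem (\ref{eqn:opt2}) at time $k$, I would build an explicit candidate solution for the problem at time $k+1$ by shifting the previous solution one step forward and appending a single admissible control move at the end of the horizon. Since $A=I$ and the applied input is the first element of the optimal sequence, the states $x(k+1),\ldots,x(N+k)$ inherited from the time-$k$ solution automatically satisfy the dynamics and the stage constraints $\mathcal{Z}_{i+k}$, and moreover $x(N+k)\in\mathbb{X}_{N+k}$. The entire burden of the proof therefore reduces to exhibiting a control input $u(N+k)$ that (i) is admissible, i.e. $(x(N+k),u(N+k))\in\mathcal{Z}_{N+k}$, and (ii) drives the state into the shifted terminal set, i.e. $x(N+k+1)\in\mathbb{X}_{N+k+1}$.

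For the appended move I would set the shapeable inputs to the flat charging rate $\tilde{c}^{shp}(N+k)$ of (\ref{eqn:term_def2}), the reactive generation equal to the stage-1 reference $\hat{q}^g(N+k)$, and the battery power to compensate exactly for the additional loads, $p^{bat}(N+k)=\hat{p}^{bat}(N+k)-p^{def}(N+k)-\tilde{p}^{shp}(N+k)$. The point of this choice is that the net real-power injection at every bus, $p^l_j+p^{bat}_j+p^{def}_j+p^{shp}_j$, then coincides with the injection $p^l_j+\hat{p}^{bat}_j$ of the (periodically extended, by Assumption \ref{as:periodic}) stage-1 solution, while the reactive injection matches as well. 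Consequently the reference flows satisfy the DistFlow equalities and the conic inequality in $\mathcal{Z}_{N+k}$, so the voltage bounds $\nu_{min}\le\nu\le\nu_{max}$, the generation bounds, and the SOCP constraint are inherited directly from stage-1 feasibility (Assumption \ref{as:feasible}).

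It then remains to verify the box constraints on the appended inputs and on the resulting state. The shapeable rate satisfies $0\le\tilde{c}^{shp}_j\le c^{shp}_{max}$ precisely because membership in $\mathbb{X}_{N+k}$ forces $e^{shp}_j(N+k)$ into the interval between $e^{shp}_{des,j}-\max(0,(k^{out}_j-(N+k))\eta\Delta T c^{shp}_{max})$ and $e^{shp}_{des,j}$. The battery lower bound $p^{bat}(N+k)\ge p^{bat}_{min}$ is exactly condition (\ref{eq:cond1}) at $l=N+k$, and the upper bound holds trivially since one only subtracts nonnegative loads from the feasible $\hat{p}^{bat}$. Writing $\delta(l):=e^{bat}(l)-\hat{e}^{bat}(l)$, the compensating battery law gives $\delta(l)=\sum_{m=l}^{k^{out}_{max}}\Delta T(p^{def}(m)+\tilde{p}^{shp}(m))\ge 0$, so $e^{bat}\ge e^{bat}_{low}$ follows from reference feasibility, while $e^{bat}\le e^{bat}_{max}$ over the range is exactly condition (\ref{eq:cond2}).

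I expect the main obstacle to be the final membership $x(N+k+1)\in\mathbb{X}_{N+k+1}$, because the terminal set is time-varying: the flat rate and the required battery SOC entering $\mathbb{X}_{N+k+1}$ are recomputed from the updated value $e^{shp}(N+k+1)$. The key observation that closes the argument is that the flat charging rate is shift-invariant. Since $e^{shp}_j(N+k+1)=e^{shp}_j(N+k)+\eta\Delta T\tilde{c}^{shp}_j$ while the denominator in (\ref{eqn:term_def2}) loses exactly one step, a short computation shows that $\tilde{c}^{shp}_j$ recomputed at $N+k+1$ equals the one at $N+k$; hence $\tilde{p}^{shp}$ is unchanged for $l\ge N+k+1$. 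The battery SOC produced by $\delta(N+k+1)=\sum_{m=N+k+1}^{k^{out}_{max}}\Delta T(p^{def}(m)+\tilde{p}^{shp}(m))$ then matches term-by-term the equality defining the battery component of $\mathbb{X}_{N+k+1}$ in Definition \ref{def:term_set}, and the shapeable components land in the shifted admissible range by the same linear-charging argument. This establishes the candidate is feasible at $k+1$ and completes the recursion.
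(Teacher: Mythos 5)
Your proposal is correct and follows essentially the same route as the paper's appendix proof: shift the optimal sequence one step, append the candidate move consisting of the flat shapeable rate $\tilde{c}^{shp}(N+k)$, the stage-1 reactive input $\hat{q}(N+k)$, and the compensating battery power $\hat{p}^{bat}-p^{def}-\tilde{p}^{shp}$, then inherit the DistFlow/voltage feasibility from the stage-1 reference and invoke conditions (\ref{eq:cond1})--(\ref{eq:cond2}) for the battery power and SOC bounds. Your explicit observation that the flat charging rate is shift-invariant is a slightly cleaner way of closing the terminal-set membership at time $N+k+1$ (the paper instead verifies the terminal inequalities by direct computation and an induction on the battery SOC equality), and your pairing of condition (\ref{eq:cond1}) with the battery power lower bound and condition (\ref{eq:cond2}) with the SOC upper bound is the correct one --- the paper's appendix swaps these two labels in its closing lines, evidently a typo.
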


\bigskip

\begin{proof}
This section gives a sketch of the proof, which is detailed in the Appendix. 
We define a feasible control sequence for \emph{stage-2} at time $k+N$, $( p^{shp}(k+N), {p}^{bat}(k+N), {q} (k+N))$, based on 
the solution of \emph{stage-1} at time $k+N$, $( \hat{p}^{shp}(k+N), \hat{p}^{bat}k+(N), \hat{q} (k+N))$:
\begin{eqnarray}
c_j^{shp}(N+k)&\mbox{=}&\frac{{e}^{shp}_{des,j}- {e}^{shp}_{j}(N+k) }{(k^{out}_j-(N+k))\eta \Delta T}1_{N+k< k^{out}_j} \label{eq:sol1} \\
  q_i(N+k)&\mbox{=}&\hat{q}_i(N+k) \label{eq:sol2}\\
p_i^{bat}(N+k)&\mbox{=}&[\hat{p_i}^{bat}\mbox{- }p_i^{def}\mbox{- 
}p^{shp}_i](N+k)\label{eq:sol3}
\end{eqnarray}
In practice, Eq (\ref{eqn:term_def1}) and (\ref{eqn:term_def2}) define a feasible control sequence after time $k+N$ where the shapeable power $\tilde{c}_j^{shp}$ at load j is constant until the plug-out time $k^{out}_j$.
Then, the appendix shows that Problem (\ref{eqn:opt2}) is recursively feasible assuming that Equations (\ref{eq:cond1}), (\ref{eq:cond2}) are true.
\end{proof}

\begin{remark}
Lemma 1 shows that under conditions (\ref{eq:cond1}), (\ref{eq:cond2}) the MPC problem is feasible at all times, if it is feasible for an initial state $x_0$. In next section we will show that the P\&P operation ensures that conditions (\ref{eq:cond1}), (\ref{eq:cond2}) are always satisfied, proving constraint satisfaction at all times. 
\end{remark}

\section{Plug-And-Play EV Charging} \label{secn:P&P}
In real distribution systems users can connect or disconnect their appliances randomly, including PEVs. 
This changes the overall load on the system and can affect bus voltages 
significantly. This section extends the MPC scheme to the case where the system dynamics 
in (\ref{eqn:sys1}) change due to loads joining or leaving the network by employing the concept of P\&P 
MPC in \cite{Melanie}. The introduction of P\&P capabilities poses two key 
challenges (\cite{Melanie}, \cite{Riverso}): (i) P\&P operations may produce infeasible operating conditions; (ii) the control law has to be redesigned for the modified dynamics. 
In the considered case, the problem is reduced to the first issue since the controller is computed centrally 
and the \emph{stage-2} MPC (c.f. Section \ref{secn:controller}) with modified dynamics directly produces the desired control law. In this section, we address the first challenge by means of a
preparation phase ensuring recursive feasibility and stability during P\&P operation. We first address the case of shapeable loads, then deferrable loads.

  \subsection{Shapeable loads} \label{sec:shp}
  As shown in Eq. (\ref{eqn:shp2}), we assume that a shapeable load can be plugged-in without drawing energy from the grid: $0\leq  c^{shp} \leq c^{shp}_{max}$. Therefore it is always optimal for a shapeable load to plug-in as soon as it requests it. That is, it can plug-in with $c^{shp}=0$ and wait for the system to allow strictly positive values $c^{shp}>0$. Thus, the output of the P\&P stage is to accept shapeable requests immediately. {Additionally, we assume that it is feasible to meet the user's requirements, i.e. fully charge the load before the maximum required time $k^{out}$, and satisfy equations (\ref{eq:cond1}), (\ref{eq:cond2}). For each new shapeable request, we check feasibility of the system by solving Eq (\ref{eqn:opt2}).} In practice, if a user makes an infeasible request, he would be asked to lower his/her requirements by allowing a later $k^{out}$ or a lower desired SOC $e^{shp}_{des}$.
  
  \subsection{Deferrable loads}\label{sec:def}
  The goal of the P\&P operation is to find a time to safely connect deferrable loads, and modify the response of shapeable loads and control devices to allow this connection as soon as possible. In this section we define a Mixed Integer Program (MIP) to find the minimum time to safely plug-in a deferrable load. After finding this time, we update the deferrable loads in the system and execute \emph{stage-2} with the new system. 
  
Deferrable loads do not impact the dynamics of the system, and only change the 
feasible set. An additional deferrable load at node j modifies the set $\mathcal{Z}_k$ via the equality:
 \begin{equation*}
P_{ij} =   p_{j}^{l} + p_{j}^{bat} +p_{j}^{def}+p_{j}^{shp}+ r_{ij}l_{ij} + \sum\limits_{k:(j,k) \in \mathcal{L}} P_{jk}
\end{equation*}
Moreover, it modifies the terminal set $\mathbb{X}_{N+k}$.
  In the following, we note $\overline{\mathcal{Z}}_k$ (respectively $\overline{\mathbb{X}}_{N+k}$) the feasible set (respectively terminal set) constraints that remain unchanged when a deferrable load plugs in.

Let's consider a P\&P request from a deferrable load at time $k$. The request can be postponed by $d_{max}<N$ maximum time steps. This creates $d_{max}$ possible load shapes.
For each possible time-delay $0 \leq d \leq d_{max}$ we note $p_{j}^{new,d}$ the corresponding shifted vector: 
\begin{equation}p_{j}^{new,d}=[\underbrace{ 0,...,0}_\text{size d}, p_{j}^{new,0}]\end{equation}. Thus we derive the new constraints when a deferrable load requests to plug-in at node $j$ and is delayed by $d$ time steps:

\begin{equation*}
P_{ij} =   p_{j}^{l} + p_{j}^{bat} +p_{j}^{def}+p_{j}^{shp}+ r_{ij}l_{ij} + \sum\limits_{m:(j,m) \in \mathcal{L}} P_{jm} + p_{j}^{new,d} 
\end{equation*}

\begin{align*} 
&{e}^{bat}(k+N)= \hat{e}^{bat}(k\mbox{+ }N)+\sum\limits_{m=k+N}^{k^{out}_{max}}\Delta T{p}^{def}(m) \\
&+\sum\limits_{m=k+N}^{k^{out}_{max}}\Delta Tp^{new,d}(m) + \frac{1}{\eta}K^{shp}\Big(e^{shp}_{des}-e^{shp}(N+k)\Big)
  \end{align*}

The solution $({x^*}, {{u}}^*,{z^*})$ 
of the following Mixed Integer Program (MIP) (\ref{eq:MILPdefferable})
gives the optimal transition time $d^*=\sum_{m=0}^{d_{max}} m z_m^* $.

\allowdisplaybreaks
\begin{subequations} \label{eq:MILPdefferable}
 \begin{align}
   \operatornamewithlimits{min}\limits_{{x}, {u}, z}~~~~~&   V_3(u,z)\\
    \mbox{s.t}~~~~~&  \mbox{\emph{System Dynamics:}} \nonumber \\
    &{x}(l+1)  =  A{x}(l) + B{u}(l)\\
     &{x}_0  =  x(0)\\
  &(x(l), u(l), P_{ij}(l), Q_{ij}(l), \nu_i(l), l_{ij}(l)) \in \overline{\mathcal{Z}_l}\nonumber\\
 &x(N+k) \in \overline{\mathbb{X}}_{N+k} \\
 & \mbox{\emph{Connection:}} \nonumber\\
 &z_m \in \{0,1\} ~~ \forall m\in \{0,1,...,d_{max}\}\\
   &\sum_{m=0}^{d_{max}} z_m=1  \\
   &\mbox{\emph{Power flow:}} \nonumber \\
 &P_{ij}(l) =   [p_{j}^{l}+ p_{j}^{bat} +p_{j}^{def}+p_{j}^{shp}](l)+ r_{ij}l_{ij}(l) \nonumber \\
 &~~~~~+ \sum\limits_{m:(j,m) \in \mathcal{L}} P_{jm}(l) + \sum_{d=0}^{d_{max}} z_d  p_{j}^{new,d}(l) \\
 & \mbox{\emph{Battery banks:}} \nonumber\\
&{e}^{bat}(N+k)= \hat{e}^{bat}(N+k)\\
&~~~~~~~ + \frac{1}{\eta}K^{shp}\Big(e^{shp}_{des}-e^{shp}(N+k)\Big)\nonumber\\
&~~~~~~~ +\sum\limits_{r=N+k}^{k^{out}_{max}} \Delta T [{p}^{def}+\sum_{m=0}^{d_{max}}z_m p^{new,m}](r)\nonumber\\
  &{p_j}^{bat}_{min} \leq \hat{p_j}^{bat}(s)\mbox{- }[p_j^{def} \mbox{+ } \sum_{m=0}^{d_{max}} z_m  p_{j}^{new,m} ](s)  \nonumber \\
 & ~~~~~~- \frac{{e}^{shp}_{des,j}- {e}^{shp}_{j}(N+k) }{(k^{out}_j\mbox{- }(N+k))\eta} \label{eq:PP_cond1}\\
  &{e}_{max,j}^{bat} \geq \sum\limits_{r=s}^{k^{out}_{max}} \Delta T [{p_j}^{def}\mbox{+ }\sum_{m=0}^{d_{max}}z_m p_j^{new,m}](r)  \label{eq:PP_cond2}\\
&~~~~~~ \mbox{+ } \frac{({e}^{shp}_{des,j}\mbox{- } {e}^{shp}_{j}(N+k) )(k^{out}_j\mbox{- }s) }{\eta (k^{out}_j\mbox{- }(N+k)) }\mbox{+ }  \hat{e}_j^{bat}(s) \nonumber\\
 \nonumber \\
 &~ s \in [N+k, k^{out}_j]\nonumber\\
 &l = k, \ldots, k+N-1\nonumber
  \end{align}
\end{subequations}

with
\begin{eqnarray}
  V_3({u},z)&=&\sum_{m=0}^{d_{max}} m z_m \label{eq:obj2}
\end{eqnarray}
Objective  (\ref{eq:obj2}) minimizes the transition delay and ensures that the problem remains feasible when the load plugs in. 

\begin{remark}
{Constraints (\ref{eq:PP_cond1}), (\ref{eq:PP_cond2}) correspond to the conditions in Lemma \ref{theorem:terminal} Eq. (\ref{eq:cond1}), (\ref{eq:cond2}) respectively.}
\end{remark}
We execute the request by (i) updating the system with the new load that plugs-in at time $d^*$, and (ii) going back to \emph{stage-2}. If $d^*>0$, then the control devices and shapeable loads update their signal during the transition phase $[N+k,N+k+d^*]$. The full controller is shown in Fig. \ref{fig:flow}.
 
\begin{figure}
\centering
\begin{tikzpicture}[scale=0.5]
\node[draw, fill=gray!35](P1) at (0,2) {Stage 1};
\node[draw, fill=gray!35](P2) at (0,-1) {Stage 2};
\node[draw, diamond, aspect=3](R) at (0,-4) {Plug-in request?};
\node[draw, diamond, aspect=3](N) at (4,-7) {No};
\node[draw, diamond, aspect=3](Y) at (-4,-7) {Yes};
\node[draw, fill=gray!35](PP) at (-4,-9) {P\&P MIP};
\draw (P1) [->, >=latex] to (P2);
\draw (P2)[->, >=latex] to (R);
\draw (R) [->, >=latex] to (Y);
\draw (Y) [->, >=latex] to (PP);
\draw (R) [->, >=latex] to (N);
\draw (N) [->, >=latex] to[bend right=120 ] (P2) ;
\draw (PP) [->, >=latex] to[bend left=120 ] (P2) ;
\end{tikzpicture}
\caption{Full controller flow: the solution at \emph{stage-1} is used to define the terminal set at \emph{stage-2}. When a new deferrable load requests to plug-in, the MIP determines the optimal plug-in time, the system is updated with the new load and the controller executes \emph{stage-2} on the new system.}
\label{fig:flow}
\end{figure}
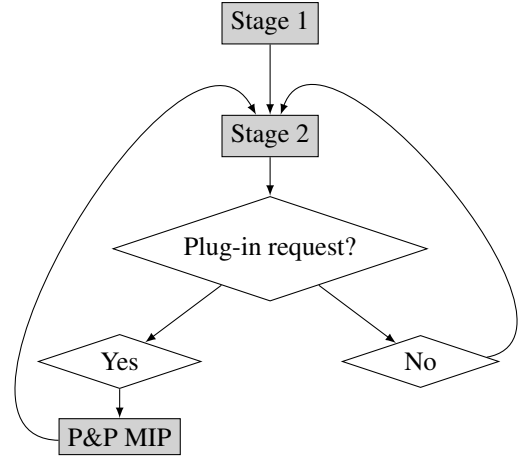

\bigskip

 \begin{theorem} \label{theorem:final}
The model predictive controller (\ref{eqn:opt2}) with network reconfigurations and transition times given by the MIP (\ref{eq:MILPdefferable}) is recursively feasible. For all initially feasible state $x_0$ and for all optimal sequences of control inputs, the controller optimization problem with P\&P network modifications (Fig. \ref{fig:flow}) remains feasible for all time.
 \end{theorem}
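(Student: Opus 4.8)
The plan is to prove recursive feasibility by induction on the time index $k$, using Lemma \ref{theorem:terminal} as the engine for propagation between consecutive steps and using the P\&P operations to guarantee that the hypotheses of that lemma, namely conditions (\ref{eq:cond1}) and (\ref{eq:cond2}), are never violated. The base case follows from the hypothesis that $x_0$ is feasible for (\ref{eqn:opt2}) together with the feasibility checks built into the P\&P stage, which ensure (\ref{eq:cond1}), (\ref{eq:cond2}) hold at $k=0$. The inductive hypothesis I carry forward is that at time $k$ the \emph{stage-2} problem is feasible \emph{and} conditions (\ref{eq:cond1}), (\ref{eq:cond2}) hold for the current network configuration; I then propagate this to $k+1$ through a case analysis dictated by the controller flow of Fig. \ref{fig:flow}.

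When no plug-in request arrives, the dynamics and the feasible set $\mathcal{Z}_k$ are unchanged, so Lemma \ref{theorem:terminal} applies verbatim: feasibility at $x(k)$ yields feasibility at $x(k+1)$, and conditions (\ref{eq:cond1}), (\ref{eq:cond2}) persist since they depend only on the \emph{stage-1} reference and the set of connected loads, both unchanged. For a shapeable plug-in request, I invoke the argument of Section \ref{sec:shp}: the load is admitted with $c^{shp}=0$, and admission occurs only after explicitly solving (\ref{eqn:opt2}) and verifying (\ref{eq:cond1}), (\ref{eq:cond2}) for the augmented system, so acceptance re-establishes the inductive hypothesis on the enlarged configuration and Lemma \ref{theorem:terminal} again governs propagation. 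A plug-out request (a fully charged load leaving) only removes demand and thus relaxes both $\mathcal{Z}_k$ and the terminal constraints, so feasibility and (\ref{eq:cond1}), (\ref{eq:cond2}) are preserved trivially.

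The deferrable case is where the substantive work lies and is the step I expect to be the main obstacle, because here the configuration genuinely changes while Lemma \ref{theorem:terminal} is stated for a fixed system. When a deferrable load requests connection at node $j$, the feasible and terminal sets are altered only through the additive shifted profile $p_j^{new,d}$ and the associated battery terminal constraint, leaving $\overline{\mathcal{Z}}_k$ and $\overline{\mathbb{X}}_{N+k}$ intact. The MIP (\ref{eq:MILPdefferable}) searches over admissible delays and returns $d^*$; by the remark preceding this theorem its embedded constraints (\ref{eq:PP_cond1}), (\ref{eq:PP_cond2}) are precisely (\ref{eq:cond1}), (\ref{eq:cond2}) rewritten for the \emph{new} system containing $p_j^{new,d^*}$. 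I would therefore argue that feasibility of the MIP certifies a feasible transition trajectory over $[N+k, N+k+d^*]$ during which the control devices and shapeable loads are re-scheduled, and that upon execution the updated system satisfies (\ref{eq:cond1}), (\ref{eq:cond2}) by construction. The delicate point to nail down is that the terminal set $\mathbb{X}_{N+k}$ of the modified system is reachable from the transition endpoint, so that the feasibility certificate (\ref{eq:sol1})--(\ref{eq:sol3}) of Lemma \ref{theorem:terminal} can be reconstructed for the new configuration; once this is established, the lemma re-applies to the modified problem.

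Combining the cases, every step either leaves the inductive hypothesis intact (no request, plug-out) or re-establishes it on a modified configuration to which Lemma \ref{theorem:terminal} is again applicable (shapeable, deferrable), so induction yields feasibility for all time. Finally, the clause \emph{for all optimal sequences of control inputs} follows because recursive feasibility in Lemma \ref{theorem:terminal} rests on the \emph{existence} of the terminal control (\ref{eq:sol1})--(\ref{eq:sol3}), not on the particular minimizer selected; hence any optimal \emph{stage-2} input keeps the state in a set from which the feasibility certificate can be rebuilt, closing the induction regardless of the chosen optimizer.
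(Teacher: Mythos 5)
Your proposal is correct and takes essentially the same route as the paper: the paper's proof is a brief sketch that likewise combines Lemma~\ref{theorem:terminal} for step-to-step propagation with the observation that the MIP constraints (\ref{eq:PP_cond1}), (\ref{eq:PP_cond2}) enforce conditions (\ref{eq:cond1}), (\ref{eq:cond2}) for the modified network, so that feasibility holds during the transition and recursively thereafter. Your version simply makes explicit the induction scaffolding, the case analysis over request types, and the reachability of the modified terminal set, all of which the paper leaves implicit.
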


 \begin{proof} \label{proof:final}
 {Assume the problem is feasible at time $k$ and a request occurs at time $k$. The P\&P MIP (\ref{eq:MILPdefferable}) ensures that all constraints are satisfied during the transition time. Further, it provides that the conditions in Lemma \ref{theorem:terminal} are satisfied for the modified network. Hence the overall procedure maintains feasibility during transition and recursive feasibility is ensured after the modification.}
 \end{proof}

\section{Numerical results}\label{secn:example}
In this section we show simulation results on a 55 bus Southern California Edison distribution network (see Fig. \ref{fig:feeder}). This network was previously studied in \cite{Farivar2012}. 
We model seven additional storage devices at nodes 2, 8, 10, 14, 21, 30, 41 (represented in green in Fig. \ref{fig:feeder}). We assume that the price of electricity is given and reflects the requirements of the system operator.
In this case study we choose the time step $\Delta t= 0.5$h, the \emph{stage-2} MPC time horizon $\frac{N}{\Delta T}=$5h and the \emph{stage-1} time horizon $\frac{N_r}{\Delta T}=$48h. In this section, we illustrate the controller signal for a period of 30h in order to show daytime and nigh-time load schedules.

\begin{figure}[h]
	\centering
    \includegraphics[trim = 11mm 0mm 1mm 0mm, clip, width =0.5 \textwidth]{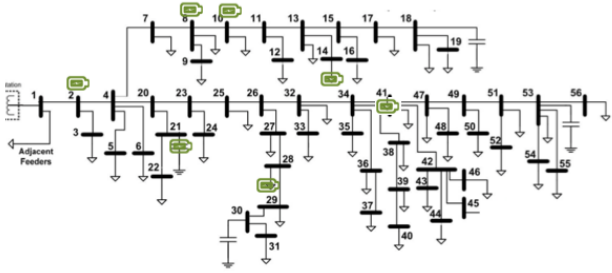}
        \caption{55 bus feeder. Additional battery banks are indicated in green (capacitors are not represented here).}
	\label{fig:feeder}
\end{figure}

\subsection{Load scheduling}

\begin{figure}[tb]
	\centering
    \includegraphics[trim = 21mm 2mm 26mm 10mm, clip, width = 0.5\textwidth]{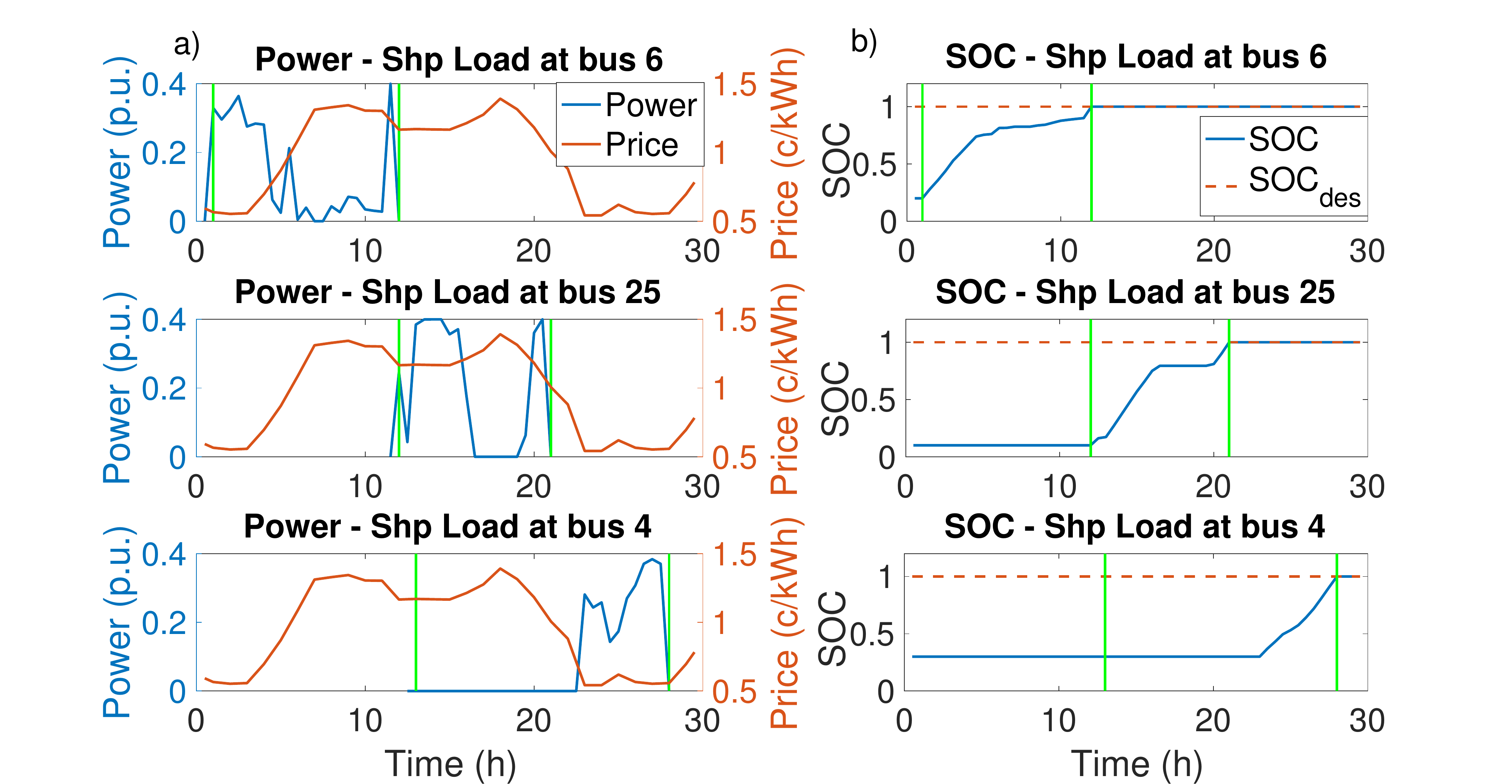}
	 \caption{Power and SOC of three different Shapeable Loads in the network. Green vertical lines show when the load requests to plug in and plug out.}
	     \label{fig:shp_plot}
\end{figure} 

\begin{figure*}[tb]
	\centering
    \includegraphics[trim = 11mm 0mm 1mm 0mm, clip, width =\textwidth]{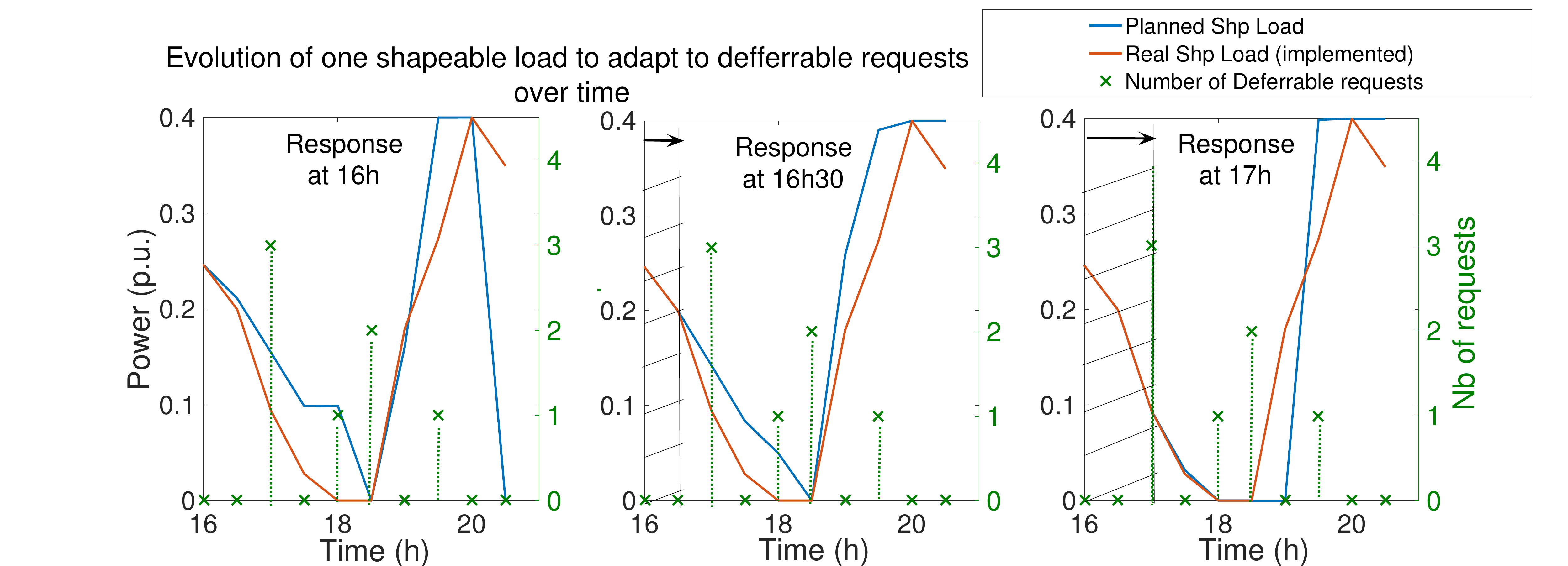}
        \caption{Evolution of one shapeable load to adapt to plug-in requests from deferrable loads. Three requests occur at 17h, forcing the shapeable load to lower its power and adapt to the new system.}
	\label{fig:shp_evolv}
\end{figure*} 

\begin{table}[b]
\caption{Description of Shapeable Loads in the System}
\label{tbl:shp}
\begin{center}
    \begin{tabular}{| r | r| r| }
    \hline
  \multicolumn{3}{|c|}{Shapeable Loads} \\
    \hline
    Time (h) & Nb of Requests & Bus number\\ \hline
    1& 2 & 6, 9  \\ \hline
    8& 2 & 5, 19  \\ \hline
     11 & 1 & 15  \\ \hline
    12 & 1 & 25  \\ \hline
    13 & 2 & 4, 31  \\ \hline
    15 & 2 & 19, 19 \\ \hline
     16 & 2 & 15, 15  \\ \hline
    16.5 & 2 & 25, 15  \\ \hline
    \end{tabular}
\end{center}
\end{table}

\begin{table}
\caption{Description of Defferable Loads in the System}
\label{tbl:def}
\begin{center}
\begin{adjustbox}{max width=0.5\textwidth}
    \begin{tabular}{| r | p{1.3cm}| p{1.5 cm}| p{1.5 cm}|p{1.5 cm}|}
    \hline
  \multicolumn{5}{|c|}{Deferrable Loads} \\
    \hline
    Time (h) & Nb of Requests &  Nb of Deferred Requests & Bus number& Plug-in Time\\ \hline
    4& 1 & 0& 8 & 4  \\ \hline
    6& 1 & 0&33 & 6  \\ \hline
     10 & 5 & 0&4, 5, 5, 16, 17 &10, 10, 10, 10, 10   \\ \hline
    11 & 1 &1& 28 &11.5  \\ \hline
    12 & 2  & 1&19, 38 & 14.5, 12  \\ \hline
    17 & 3  & 0&8, 20, 22 & 17, 17, 17\\ \hline
    18 & 1 & 0&12 & 18\\ \hline
     18.5 & 2&0 & 5, 22 & 18.5, 18.5  \\ \hline
    19.5 & 1 &0& 12 & 19.5  \\ \hline
    \end{tabular}
    \end{adjustbox}
\end{center}
\end{table}

Tables \ref{tbl:shp} and \ref{tbl:def} show plug-in requests from shapeable loads and deferrable loads in the network. Table \ref{tbl:shp} shows the 14 shapeable requests, their request time and corresponding bus number. As mentioned in Section \ref{sec:shp}, shapeable loads plug-in as soon as they requests it, can be zero-power during a certain time and fully charge before their desired plug-out time $k^{out}$.
Figure \ref{fig:shp_plot} shows three shapeable loads and how they charge as a function of the electricity price. The vertical green lines show the connected period: the first green line is the request time, the second green line is $k^{out}$, i.e. the time between the two lines is the only time when the load is plugged-in. Figure \ref{fig:shp_plot}a shows that loads draw power only when they are plugged-in and Fig. \ref{fig:shp_plot}b shows that they reach their desired SOC before $k^{out}$. In these three examples, the loads tend to charge when the price of electricity is cheaper. In particular, the load at bus 4 avoids the evening peak time (between 5pm and 9pm), and charges during the night time (between 10pm and 6am).

 Shapeable loads have the flexibility to adapt their power signal to the conditions and constraints of the network. In particular, when a deferrable load requests to plug-in, shapeable loads can adapt their response to the new constraints to allow the new deferrable load to plug-in, without violating the network constraints. Figure \ref{fig:shp_evolv} shows the response of one shapeable load at three consecutive time steps: 16h, 16h30 and 17h. The response evolves across time and changes when deferrable loads request to plug-in.  In particular the power ramps down at 17h when three new requests occur. This allows the network to adapt to these new loads, and the three requests are all immediately accepted. Table \ref{tbl:def} shows that only two deferrable loads need to be deferred in this case: one load at time 11h and bus 28 and one load at time 11h and bus 19. Figure \ref{fig:total_load_evolv} shows that one deferrablein the evening load is delayed at 11h. During the transition phase (11h-11h30), shapeable loads adapt their signal to enable safe connection of the deferrable load at 11h30.

\begin{figure*}[tb]
   \begin{minipage}[c]{0.5\linewidth}
      \includegraphics[page=3,trim = 10mm 0mm 0mm 0mm, clip, width = \textwidth]{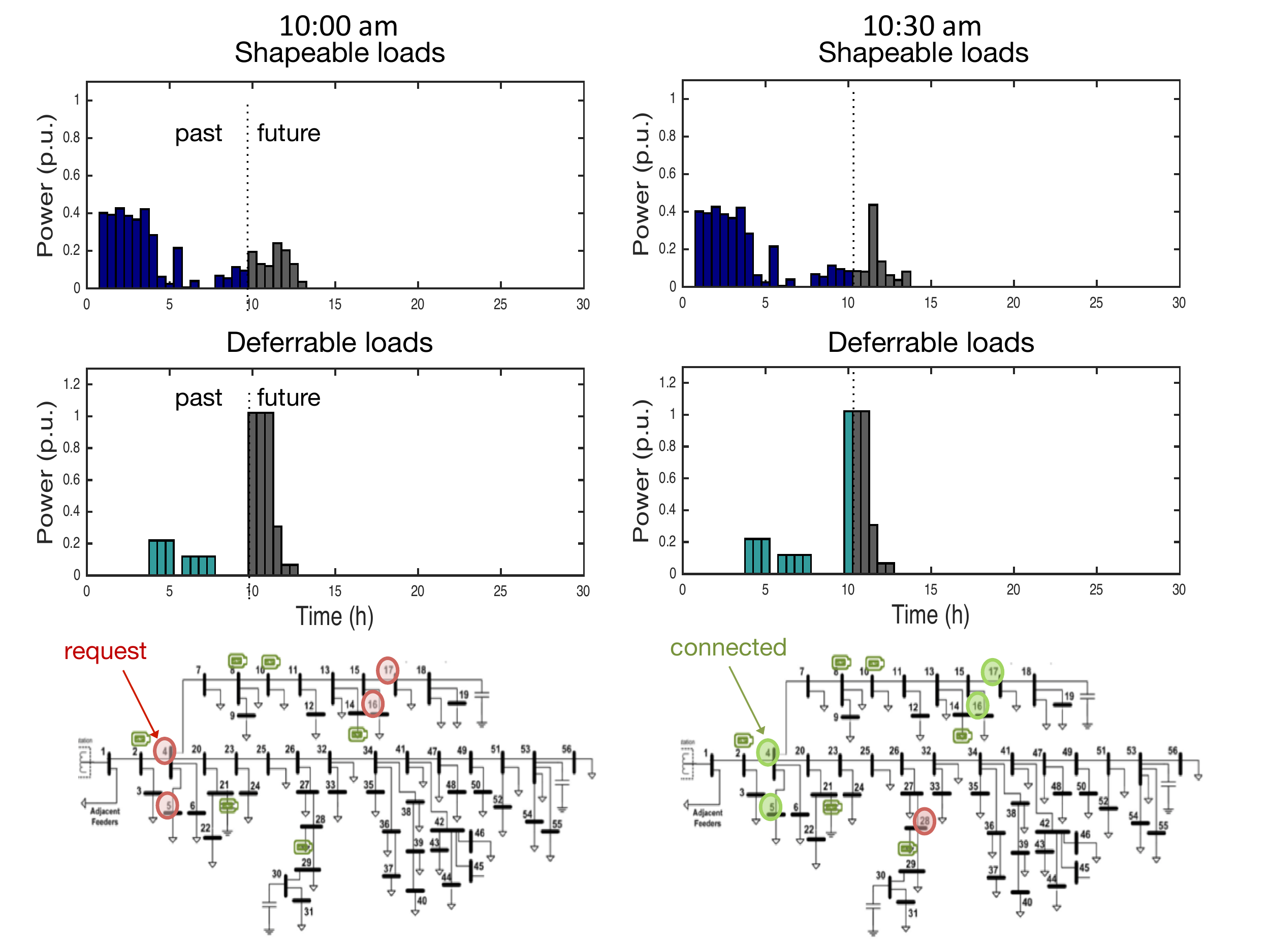}
   \end{minipage} \hfill
   \begin{minipage}[c]{0.5\linewidth}
      \includegraphics[page=5,trim = 7mm 0mm 0mm 0mm, clip, width =\textwidth]{Evovl.pdf}
   \end{minipage}
    \caption{Evolution of loads: one deferrable load requests to plug-in at 11h
 and is delayed to connect at 11h30. During the transition phase (11h-11h30), shapeable loads adapt their signal to enable safe connection of the deferrable load.}
      \label{fig:total_load_evolv}
\end{figure*}
\subsection{Network constraints}
In this section, we illustrate the network constraints, namely voltage constraints and battery banks constraints. Figure \ref{fig:voltage} shows the voltage at each bus and time step. It shows that voltage remains between the bounds 0.95 and 1 at all times.
\begin{figure}[tb]
	\centering
    \includegraphics[trim = 11mm 0mm 1mm 0mm, clip, width = 0.5\textwidth]{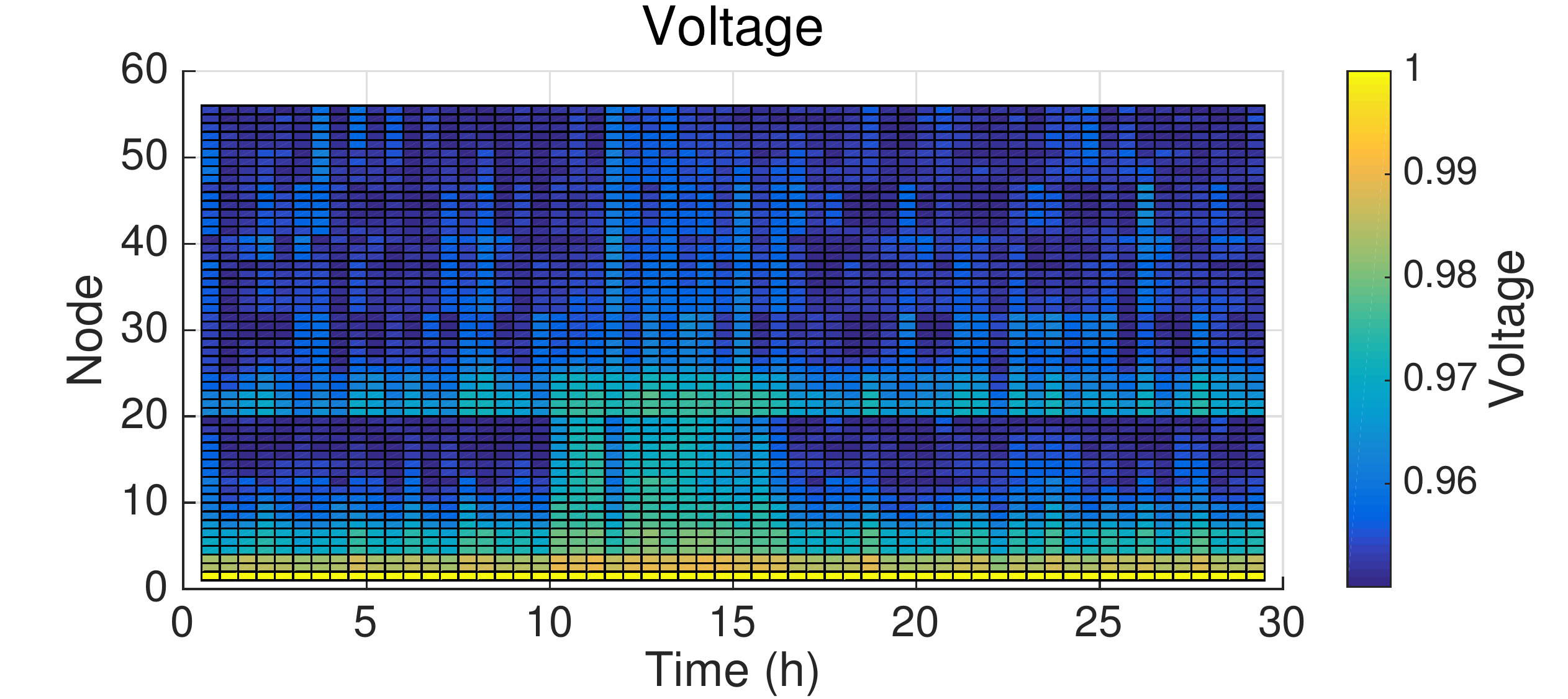}
	\caption{Voltage at each mode of the network}
     \label{fig:voltage}
\end{figure} 

Figure \ref{fig:batt} shows the power and SOC at the seven battery banks and Fig. \ref{fig:real_full} shows the aggregated real power over time. Figure \ref{fig:batt}a shows that batteries tend to highly discharge, i.e. have high negative power, around 10h, 15h and 20h. Figure~\ref{fig:real_full} shows that these are times when the network is highly loaded, i.e. a lot of shapeable loads and deferrable loads are connected and fixed loads are high. Storage devices are used to supply additional power in case of load peaks. Note that we impose the minimum SOC, $e^{bat}_{low}= 0.12$ however the SOC never goes below 0.3. This limit is due to the terminal constraints (\ref{eq:stage1:term}) in \emph{stage-1} of the problem: the initial SOC has to be recoverable at the final time $N_h=48h$.
\begin{figure}[tb]
	\centering
    \includegraphics[trim = 11mm 0mm 1mm 0mm, clip, width = 0.5\textwidth]{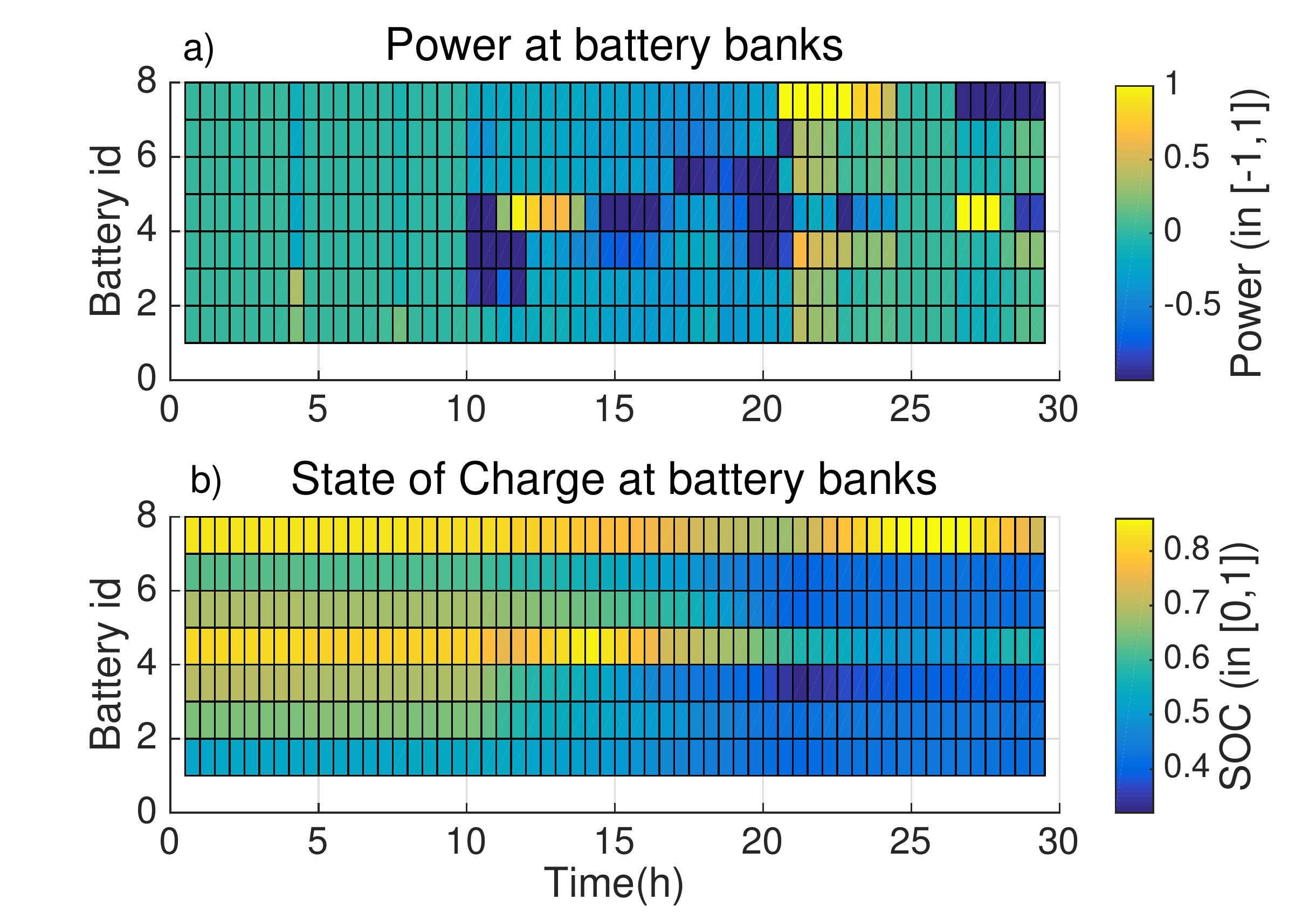}
    \caption{a) Real power and b) State of Charge at battery banks. Values are normalized }
     \label{fig:batt}
\end{figure} 

\begin{figure}[tb]
	\centering
    \includegraphics[trim = 29mm 0mm 6mm 0mm, clip, width =0.5 \textwidth]{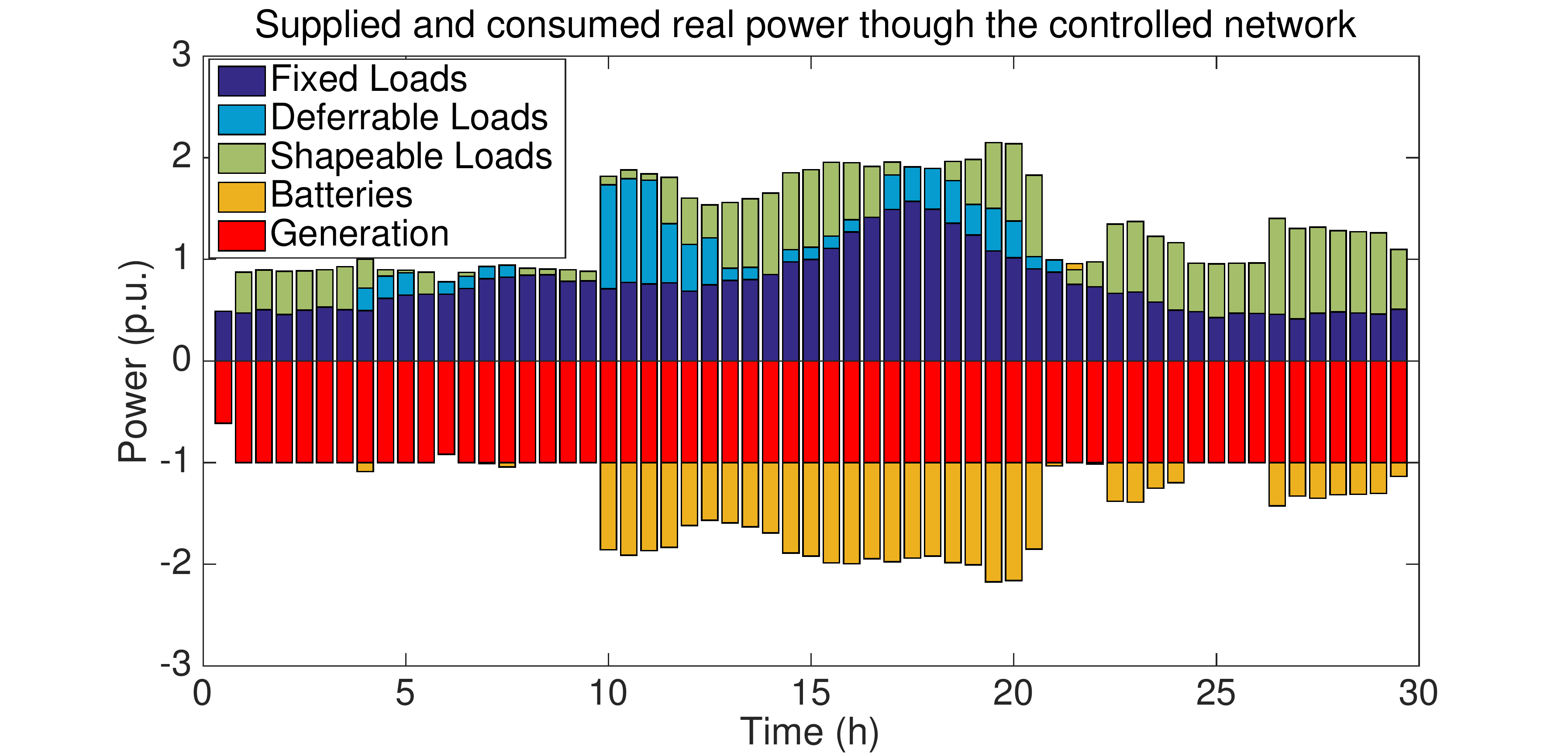}
        \caption{Real power across the different devices of the networks}
	\label{fig:real_full}
\end{figure}

\subsection{Peak reduction impact}
In this section we illustrate peak reduction impact of the controller. Figure \ref{fig:real_pow} shows the aggregate load in the network in three cases: a) in the uncontrolled case, b) when the controller is applied to the network with deferrable and shapeable loads and c) when the controller is applied to the network without deferrable loads. In the uncontrolled case, every load plugs in as soon as it requests it.  
\begin{figure}[tb]
	\centering
	\includegraphics[trim = 8mm 0mm 10mm 0mm, clip, width = 0.45\textwidth]{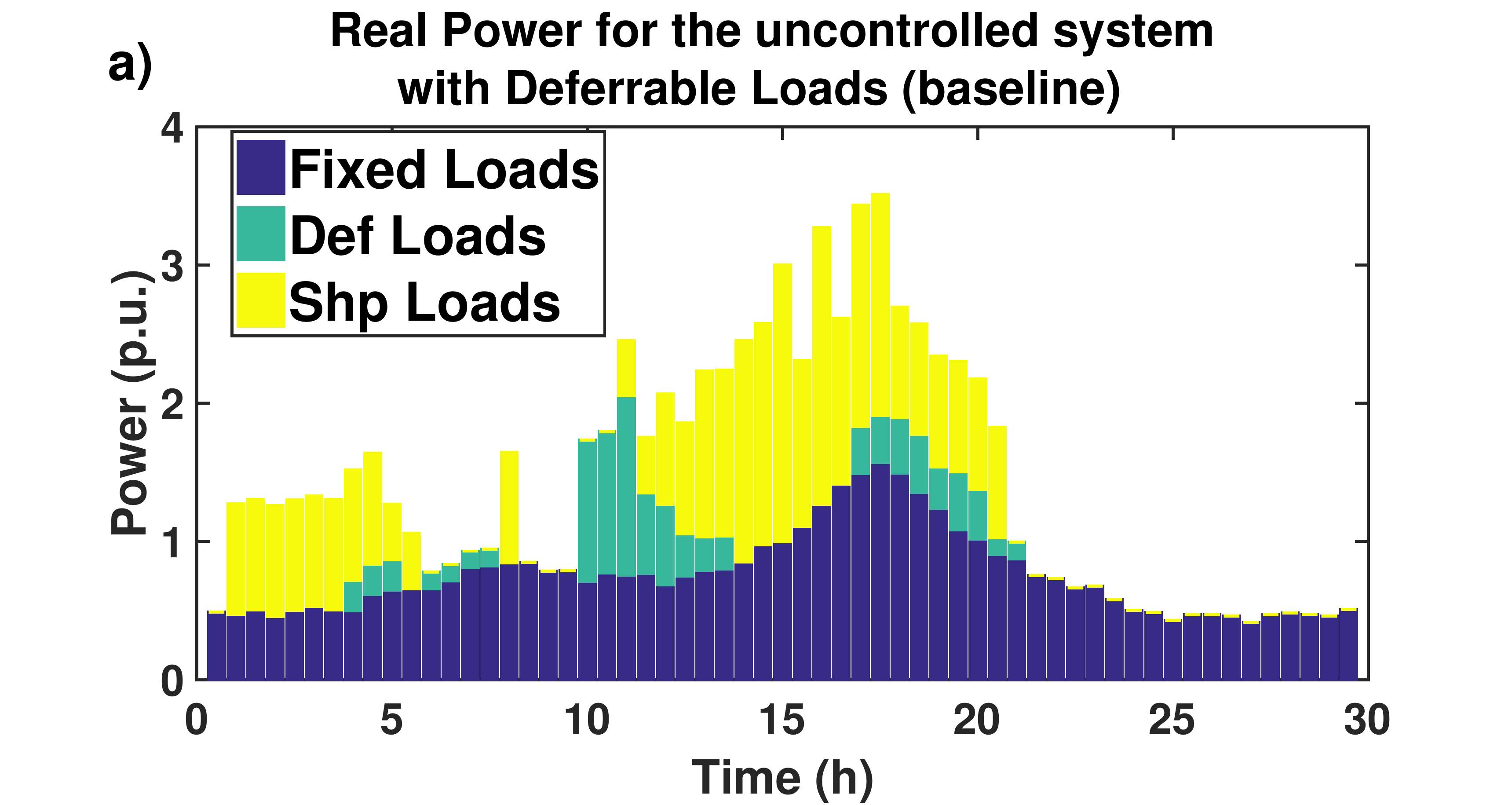}
	\centering
    \includegraphics[trim = 8mm 0mm 10mm 0mm, clip, width = 0.45\textwidth]{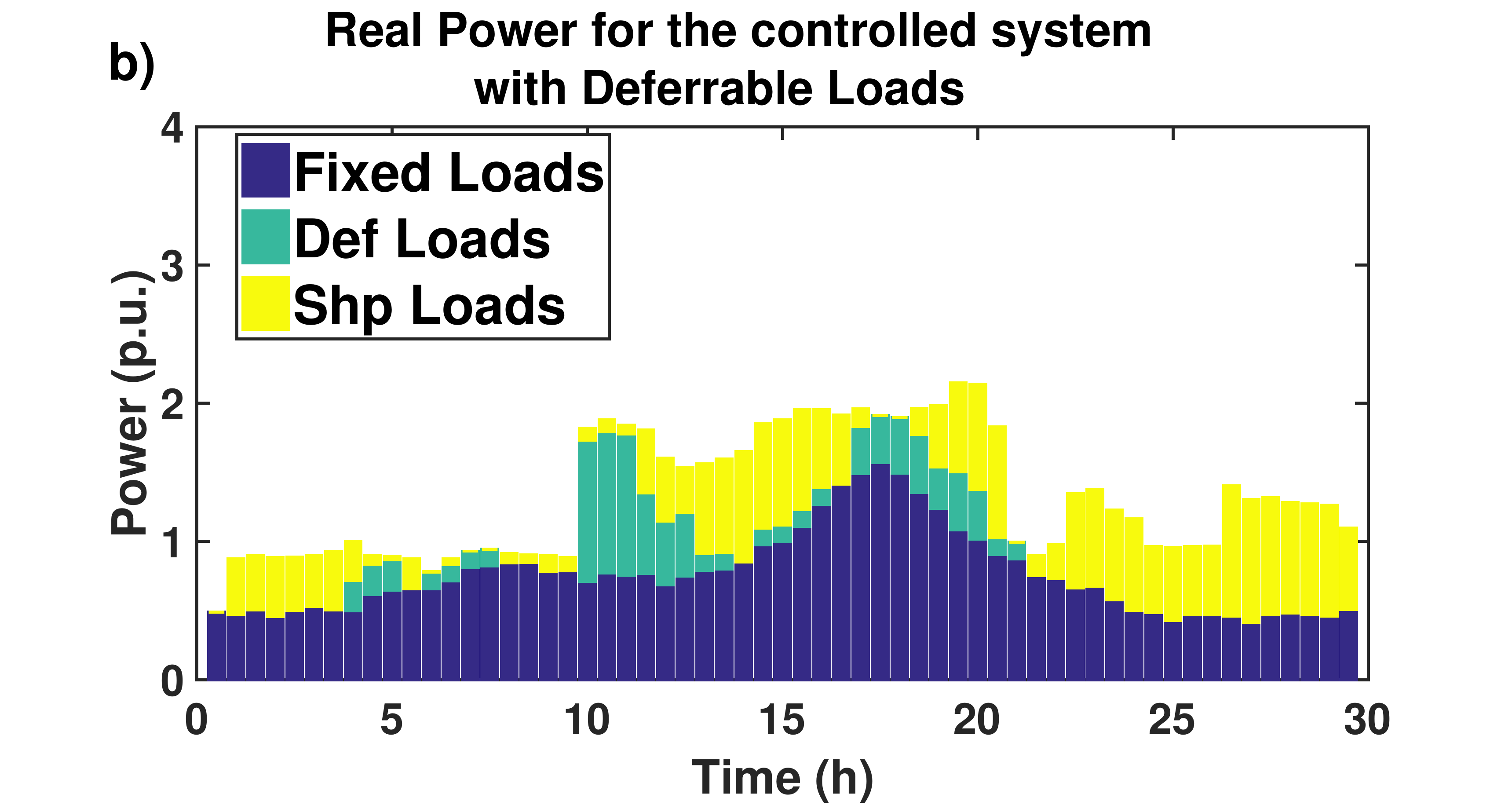}
    \centering
        \includegraphics[trim = 8mm 0mm 10mm 0mm, clip, width = 0.45\textwidth]{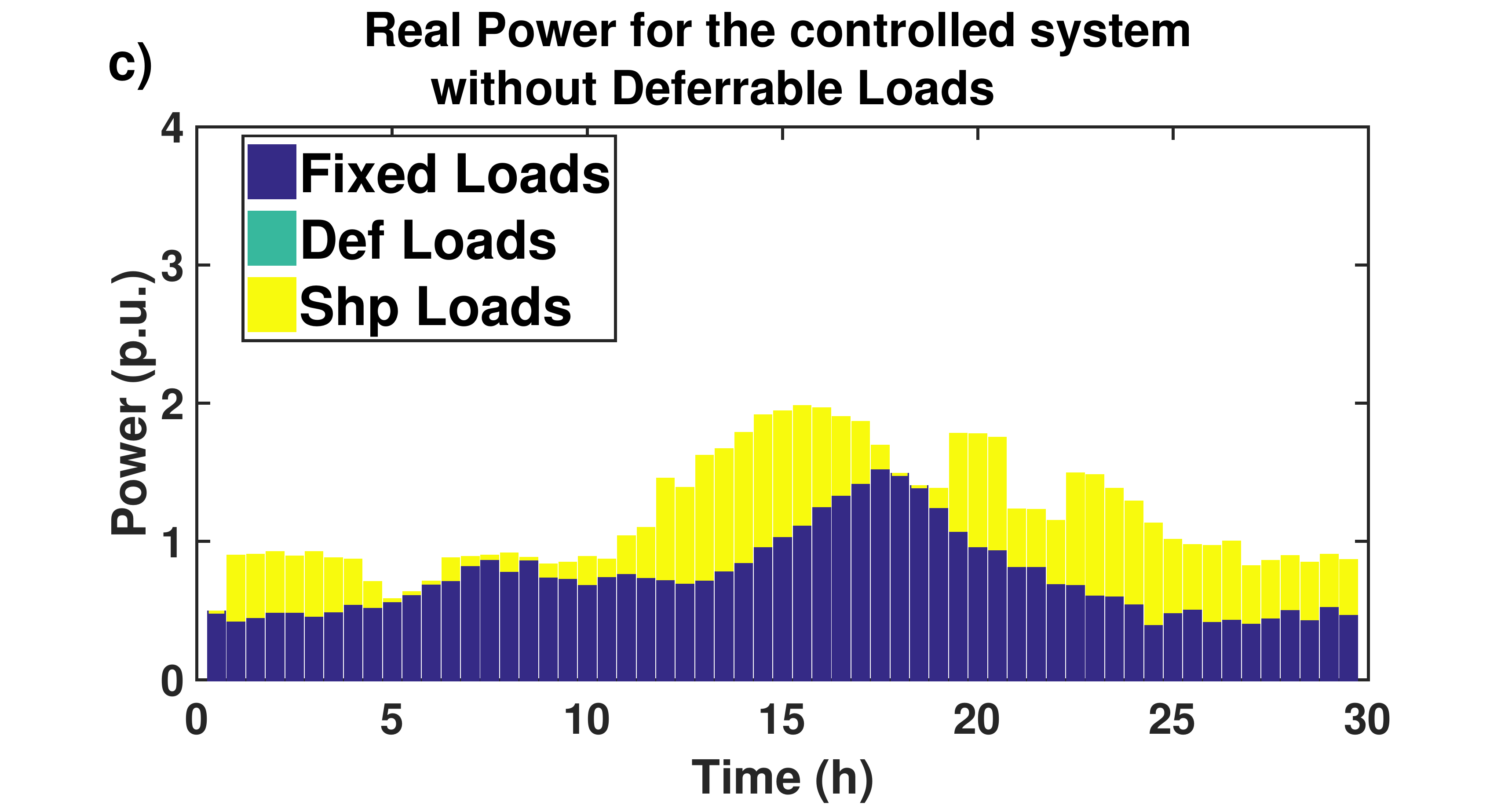}
    \caption{Cumulative real power in the network for a) the controlled system without deferrable loads, b) the controlled system with deferrable loads, c) the uncontrolled system with deferrable loads}
    \label{fig:real_pow}
\end{figure} 

The total peak in the uncontrolled case (Fig. \ref{fig:real_pow}a) is 3.5 p.u. whereas the total peak in the controlled case (Fig \ref{fig:real_pow}b) is 2 p.u, providing 40$\%$ reduction of the daily load peak. In the uncontrolled case, a lot of additional loads plug-in during the peak time (between 3pm and 9pm) and immediately charge. On the contrary, in the controlled case, shapeable loads are delayed to the night time, which results in a smoother load curve.

The difference between Fig. \ref{fig:real_pow}b and Fig. \ref{fig:real_pow}c illustrates how shapeable loads' schedules change when deferrable loads are connected to the network. Fig \ref{fig:real_pow}b shows that shapeable loads adapt their load profile to enable connection of deferrable loads: in Fig \ref{fig:real_pow}b shapeable power tends to be delayed to the night time, in order to allow connection of deferrable loads in the evening (6pm-9pm).

\section{Conclusion} \label{secn:conclusion}
In this paper, we have proposed a predictive controller that is 
capable of handling P\&P requests of flexible and deferrable loads. 
First, an MPC approach for minimizing the global cost of the system was used to aggregate flexible loads and provide load shaping objectives under distribution grid constraints. Second, we have proposed a MIP that safely connects loads and minimizes waiting times. We proved that our algorithm achieves recursive feasibility, by appropriately defining the connection conditions and the terminal constraint set. The performance of the proposed method was demonstrated for the control of a radial distribution system with 55 buses.
%



\section*{APPENDIX} \label{appendix}
Assume that the problem is feasible at time $k$, with the optimal control sequence
$U^*(k)=\left \{ u^*(k),u^*({k+1}),...,u^*({k+N-1})\right \}$ and the predicted state trajectory $X^*(k)=\left \{ x^*({k+1}),...,x^*({k+N-1}),x^*({k+N})\right \}$. Then, at time $k+1$, we show that the control sequence $U^*(k+1)=\left \{ u^*({k+1}),u^*({k+2}),...,u^*({k+N-1}), v({k+N})\right \}$ is feasible where $v({k+N})$ is defined by Equations (\ref{eq:sol1}), (\ref{eq:sol2}), (\ref{eq:sol3}). The state trajectory at time $k+1$ is $X^*(k+1)=\left \{ x^*({k+2}),...,x^*({k+N}),x({k+N+1})\right \}$, where $x({k+N+1})$ is derived in the next sections.   
With the notations in Section \ref{sec:DynSys}:
\begin{eqnarray*}
x^*({k+N}) &=& \begin{bmatrix} x_1(k+N), x_2(k+N) \end{bmatrix} ^T\\
&=& \begin{bmatrix} (e^{shp}_1,\ldots,{e}^{shp}_{M^{shp}}) ,(e^{bat}_1\ldots,e^{bat}_n)\end{bmatrix}^T(N) \\
x({k\mbox{+}N\mbox{+}1}) &=& \begin{bmatrix} x_1(k\mbox{+}N\mbox{+}1), x_2(k\mbox{+}N\mbox{+}1) \end{bmatrix} ^T\\
 v({k+N}) &=& \begin{bmatrix} q^{g}(k+N) ~ u^{shp}(k+N) ~ p^{bat}(k+N) \end{bmatrix}_{}^T 
\end{eqnarray*}
And  $v({k+N})$ is defined by the following control values:
\begin{align*}
c_j^{shp}(k\mbox{+}N)&\mbox{=}\frac{{e}^{shp}_{des,j}- {e}^{shp}_{j}(k+N) }{(k^{out}_j-(k+N))\eta \Delta T}1_{(k\mbox{+}N)< k^{out}_j} \\
&~~~~j \in \{1,...,M^{shp}\} \\
  q_i(k\mbox{+}N)&\mbox{=}\hat{q}_i(k\mbox{+}N) ~~~i \in \{1,...,n\}\\
p_i^{bat}(k\mbox{+}N)&\mbox{=}[\hat{p_i}^{bat}(k\mbox{+}N)\mbox{- }p_i^{def}(k\mbox{+}N)\mbox{- 
}p^{shp}_i(k\mbox{+}N)] 
\end{align*}
In the remainder of this section, we derive $x({k+N+1})$ and prove that the solution is feasible.

\subsection{Shapeable loads state of charge}
Let consider $j \in \{1,...,M^{shp}\}$:
\begin{align*}
  e^{shp}_j(k+N+1)&=e^{shp}_j(k+N)+\eta \Delta T c^{shp}_j(k+N)\\
  &= e_j^{shp}(k+N)\\
  &~+\eta \Delta T \frac{{e}^{shp}_{des,j}- {e}^{shp}_{j}(k+N) }{(k^{out}_j-(k+N))\eta \Delta T}1_{(k+N)\leq k^{out}_j} 
\end{align*}
At time $k+N:$ 
\begin{align*}
 &e^{shp}_j(k+N)\geq
 e^{shp}_{des,j}\mbox{ - }\mbox{\small{max}} (0,{\eta\Delta T}{\small{(k^{out}_j\mbox{ - }(k+N)}}{c^{shp}_{\small{max},j}})\\
   &e^{shp}_j(k+N)\geq e^{shp}_{min,j}   
\end{align*}
  Thus, we obtain the following condition at time $k+N+1$, which ensures
  recursive feasibility of the terminal constraint for shapeable loads $e^{shp}_j$:
  
  \begin{align}
 e^{shp}_j(k+N+1)&\geq e^{shp}_j(k+N)\geq e^{shp}_{min,j} 
  \end{align}
  
  If $N< k^{out}_j$
  
  \begin{align}
&e^{shp}_{des,j}- e^{shp}_j(k+N+1)\nonumber \\  &= e^{shp}_{des,j}-(e_j^{shp}(k+N)+\frac{{e}^{shp}_{des,j}- {e}^{shp}_{j}(k+N) }{(k^{out}_j-(k+N))}  \nonumber \\
 &=\Big(e^{shp}_{des,j}-e_j^{shp}(k+N)\Big)(1- \frac{1}{k^{out}_j-k-N})\nonumber\\
 &\leq \eta \Delta T (k^{out}-k-N) c^{shp}_{max,j}\frac{k^{out}_j-(k+N+1)}{k^{out}-N}\nonumber \\
  &\leq \eta \Delta T \Big(k^{out}-(k+N+1)\Big) c^{shp}_{max,j} 
   \end{align}
  Moreover:
  \begin{equation*}
  e^{shp}_{des,j}- e^{shp}_j(k+N+1) \geq 0 \mbox{  since $k^{out}_j-(k+N) \geq 1$}
  \end{equation*}

  If $N\geq k^{out}$
  \begin{align}
  e^{shp}_j(k+N+1)= e^{shp}_j(k+N)&\geq e^{shp}_{des,j}\\
  & \leq e^{shp}_{des,j}
  \end{align}

  \subsection{Network constraints}
  Consider a node $i \in \{1,...,n\}$.
Equations (\ref{eq:sol2}), (\ref{eq:sol3}) give:\\
$\begin{cases}
  q_i(k+N)=\hat{q}_i(k+N)\\
  p_i^{bat}(k+N)=[\hat{p_i}^{bat}\mbox{- }p_i^{def}\mbox{- }p^{shp}_i](k+N)
\end{cases}{~}\\$
Equation (\ref{eqn:orig_PF1}) gives:
\begin{eqnarray*}
  P_{ij}(k+N) &= &  [p_{j}^{l}+ \hat{p_j}^{bat}+ r_{ij}l_{ij}](k+N) \\&+&
  \sum\limits_{m:(j,m) \in \mathcal{L}} P_{jm}(k+N)
\end{eqnarray*}
Equation (\ref{eqn:orig_PF2}), (\ref{eqn:orig_PF3}) and (\ref{eq:convex}) give:
\begin{eqnarray*}
Q_{ij} (k+N)  &=  &[q_{j}^{l} - \hat{q}_{j}^{g} + x_{ij}l_{ij}](k+N) \\&+& \sum\limits_{k:(j,m) \in \mathcal{L}} 
Q_{jm}(k+N)\\
\nu_{j}(k+N)&  = &\nu_{i}(k+N) + (r_{ij}^2 + x_{ij}^2)l_{ij} \\&-& 2(r_{ij}P_{ij}(k+N) + 
x_{ij}Q_{ij})\\
l_{ij}(k+N)  &\geq& \frac{P_{ij}^2(k+N) + Q_{ij}^2(k+N)} {\nu_{i}(k+N)}
\end{eqnarray*}

This is the system of power flow equations for $(p,q)=(\hat{p},\hat{q})$. Thus it 
is feasible and the voltage bounds are satisfied:
\begin{equation*}
  \nu_{min}\leq\nu(k+N)\leq\nu_{max}
\end{equation*}

  \subsection{Battery banks}
  The power constraint and terminal constraint for the SOC of battery banks must be 
  satisfied. By induction, we show that $\forall l \in [k+N, k^{out}_{max}$]:
  \begin{equation}
   {e}_i^{bat}(l)=\hat{e}_i^{bat}(l)+\sum\limits_{m=l}^{k^{out}_{max}}\Delta T [{p}_i^{def}+\tilde{p}_i^{shp}](m) \label{induction}
  \end{equation}
  
  By definition of the terminal set, this is true at time $k+N$. Now, suppose it is true at time $l \in [k+N, {k^{out}_{max}}$], then:
  \allowdisplaybreaks
  \begin{align*}
    e_i^{bat}(l+1)&=e_i^{bat}(l) + \Delta T  p_i^{bat}(l) \\
    &= e_i^{bat}(l) + \Delta T [\hat{p_i}^{bat}(l)\mbox{- }p_i^{def}(l)\mbox{- }\tilde{p}_i^{shp}(l)]\\
    &=\hat{e}_i^{bat}(l)+\sum\limits_{m=l}^{k^{out}_{max}}\Delta T [{p}_i^{def}(m)+\tilde{p}_i^{shp}(m)]\\
    &~+ \Delta T [\hat{p_i}^{bat}(l)\mbox{- }p^{def}(l)\mbox{- }\tilde{p}^{shp}(l)]\\
    &=\hat{e}_i^{bat}(l+1)+\sum\limits_{m=l+1}^{k^{out}_{max}}\Delta T [{p}_i^{def}(m)+\tilde{p}_i^{shp}(m)]
  \end{align*}
  This is Eq. (\ref{induction}) at time $l+1$, proving that (\ref{induction}) holds by induction. Now, Eq. (\ref{induction}) at time $k+N+1$ gives:
  \begin{eqnarray*}
   {e}_i^{bat}(k+N+1)&=&\hat{e}_i^{bat}(k+N+1)\\&&+\sum\limits_{l=k+N+1}^{k^{out}_{max}}\Delta T [{p}_i^{def}(l)+\tilde{p}_i^{shp}(l)] \label{eqn:bat_N+1}
  \end{eqnarray*}

Thus ${e}_i^{bat}(k+N+1)\geq\hat{e}_i^{bat}(k+N+1)\geq {e}_{i,min}^{bat}$ and Assumption (\ref{eq:cond1}) gives ${e}_i^{bat}(k+N+1)\leq{e}_{i,max}^{bat}$. 
Moreover,
\begin{eqnarray*}
p_i^{bat}(k+N) &\mbox{=}&[\hat{p_i}^{bat}(k\mbox{+ }N)\mbox{- }p_i^{def}(k\mbox{+ }N)\mbox{- 
}p^{shp}_i(k\mbox{+ }N)]\\ &\leq& p_i^{bat}(k+N) \leq p_{i,max}^{bat}
\end{eqnarray*}
and condition (\ref{eq:cond2}) gives $p_i^{bat}(k+N)\geq p_{i,min}^{bat}$.\\
This concludes the proof of recursive feasibility.

\bibliographystyle{ieeetr}
\bibliography{EV_Charging}

\begin{thebibliography}{10}

\bibitem{wang2011survey}
W.~Wang, Y.~Xu, and M.~Khanna, ``A survey on the communication architectures in
  smart grid,'' {\em Computer Networks}, vol.~55, no.~15, pp.~3604--3629, 2011.

\bibitem{fan2013smart}
Z.~Fan, P.~Kulkarni, S.~Gormus, C.~Efthymiou, G.~Kalogridis, M.~Sooriyabandara,
  Z.~Zhu, S.~Lambotharan, and W.~H. Chin, ``Smart grid communications: overview
  of research challenges, solutions, and standardization activities,'' {\em
  Communications Surveys \& Tutorials, IEEE}, vol.~15, no.~1, pp.~21--38, 2013.

\bibitem{palensky2011demand}
P.~Palensky and D.~Dietrich, ``Demand side management: Demand response,
  intelligent energy systems, and smart loads,'' {\em Industrial Informatics,
  IEEE Transactions on}, vol.~7, no.~3, pp.~381--388, 2011.

\bibitem{mohsenian2010autonomous}
A.-H. Mohsenian-Rad, V.~W. Wong, J.~Jatskevich, R.~Schober, and A.~Leon-Garcia,
  ``Autonomous demand-side management based on game-theoretic energy
  consumption scheduling for the future smart grid,'' {\em Smart Grid, IEEE
  Transactions on}, vol.~1, no.~3, pp.~320--331, 2010.

\bibitem{Gellings2009}
C.~W. Gellings, {\em The smart grid: enabling energy efficiency and demand
  response}.
\newblock The Fairmont Press, Inc., 2009.

\bibitem{LeFloch2015Distributed}
C.~Le~Floch, F.~Belletti, S.~Saxena, A.~Bayen, and S.~Moura, ``Distributed
  optimal charging of electric vehicles for demand response and load shaping,''
  in {\em 2015 IEEE 54th Annual Conference on Decision and Control (CDC)},
  2015.

\bibitem{Lund2008}
H.~Lund and W.~Kempton, ``Integration of renewable energy into the transport
  and electricity sectors through {V2G},'' {\em Energy policy}, vol.~36, no.~9,
  pp.~3578--3587, 2008.

\bibitem{richardson2013electric}
D.~B. Richardson, ``Electric vehicles and the electric grid: A review of
  modeling approaches, impacts, and renewable energy integration,'' {\em
  Renewable and Sustainable Energy Reviews}, vol.~19, pp.~247--254, 2013.

\bibitem{kempton2005vehicle}
W.~Kempton and J.~Tomi{\'c}, ``Vehicle-to-grid power fundamentals: Calculating
  capacity and net revenue,'' {\em Journal of power sources}, vol.~144, no.~1,
  pp.~268--279, 2005.

\bibitem{Langton2013}
A.~Langton and N.~Crisostomo, ``Vehicle-grid integration: A vision for
  zero-emission transportation interconnected throughout california's
  electricity system,'' tech. rep., California Public Utilities Commission.

\bibitem{delfino2014multilevel}
F.~Delfino, R.~Minciardi, F.~Pampararo, and M.~Robba, ``A multilevel approach
  for the optimal control of distributed energy resources and storage,'' {\em
  Smart Grid, IEEE Transactions on}, vol.~5, no.~4, pp.~2155--2162, 2014.

\bibitem{shaaban2014real}
M.~F. Shaaban, M.~Ismail, E.~F. El-Saadany, and W.~Zhuang, ``Real-time pev
  charging/discharging coordination in smart distribution systems,'' {\em Smart
  Grid, IEEE Transactions on}, vol.~5, no.~4, pp.~1797--1807, 2014.

\bibitem{lopes2011integration}
J.~A.~P. Lopes, F.~J. Soares, and P.~M.~R. Almeida, ``Integration of electric
  vehicles in the electric power system,'' {\em Proceedings of the IEEE},
  vol.~99, no.~1, pp.~168--183, 2011.

\bibitem{mou2015decentralized}
Y.~Mou, H.~Xing, Z.~Lin, and M.~Fu, ``Decentralized optimal demand-side
  management for phev charging in a smart grid,'' {\em Smart Grid, IEEE
  Transactions on}, vol.~6, no.~2, pp.~726--736, 2015.

\bibitem{riverso2015plug}
S.~Riverso, F.~Sarzo, and G.~Ferrari-Trecate, ``Plug-and-play voltage and
  frequency control of islanded microgrids with meshed topology,'' {\em Smart
  Grid, IEEE Transactions on}, vol.~6, no.~3, pp.~1176--1184, 2015.

\bibitem{Melanie}
M.~N. Zeilinger, Y.~Pu, S.~Riverso, G.~Ferrari-Trecate, and C.~N. Jones, ``Plug
  and play distributed model predictive control based on distributed invariance
  and optimization,'' in {\em 52nd IEEE Annual Conference on Decision and
  Control, 2013}, pp.~5770--5776.

\bibitem{Stoustrup2009}
J.~Stoustrup, ``Plug \& play control: Control technology towards new
  challenges,'' {\em European Journal of Control}, vol.~15, no.~3,
  pp.~311--330, 2009.

\bibitem{bansal2014plug}
S.~Bansal, M.~N. Zeilinger, and C.~J. Tomlin, ``Plug-and-play model predictive
  control for electric vehicle charging and voltage control in smart grids,''
  in {\em 2014 IEEE 53rd Annual Conference on Decision and Control (CDC)},
  pp.~5894--5900, IEEE, 2014.

\bibitem{farivar2013branch}
M.~Farivar and S.~H. Low, ``Branch flow model: Relaxations and
  convexification---part i,'' {\em Power Systems, IEEE Transactions on},
  vol.~28, no.~3, pp.~2554--2564, 2013.

\bibitem{he2012optimal}
Y.~He, B.~Venkatesh, and L.~Guan, ``Optimal scheduling for charging and
  discharging of electric vehicles,'' {\em IEEE Transactions on Smart Grid},
  vol.~3, no.~3, pp.~1095--1105, 2012.

\bibitem{Hu2014Coordinated}
J.~Hu, S.~You, M.~Lind, and J.~Ostergaard, ``Coordinated charging of electric
  vehicles for congestion prevention in the distribution grid,'' {\em IEEE
  Transactions on Smart Grid}, vol.~5, no.~2, pp.~703--711, 2014.

\bibitem{Baran1989}
M.~E. Baran and F.~F. Wu, ``Optimal sizing of capacitors placed on a radial
  distribution system,'' {\em IEEE Trans. on Power Delivery}, vol.~4, no.~1,
  pp.~735--743, 1989.

\bibitem{Farivar2013}
M.~Farivar, L.~Chen, and S.~Low, ``Equilibrium and dynamics of local voltage
  control in distribution systems,'' in {\em 52nd IEEE Annual Conference on
  Decision and Control (CDC), 2013}, pp.~4329--4334.

\bibitem{Baran1989_2}
M.~E. Baran and F.~F. Wu, ``Optimal capacitor placement on radial distribution
  systems,'' {\em IEEE Trans. on Power Delivery}, vol.~4, no.~1, pp.~725--734,
  1989.

\bibitem{Bansai2014Plug}
S.~Bansal, M.~Zeilinger, and C.~Tomlin, ``Plug-and-play model predictive
  control for electric vehicle charging and voltage control in smart grids,''
  in {\em Decision and Control (CDC), 2014 IEEE 53rd Annual Conference on},
  pp.~5894--5900, Dec 2014.

\bibitem{Riverso}
S.~Riverso, M.~Farina, and G.~Ferrari-Trecate, ``Plug-and-play decentralized
  model predictive control for linear systems,'' {\em IEEE Trans. on Automatic
  Control}, vol.~58, no.~10, pp.~2608--2614, 2013.

\bibitem{Farivar2012}
M.~Farivar, C.~R. Clarke, S.~H. Low, and K.~M. Chandy, ``Inverter {VAR} control
  for distribution systems with renewables,'' in {\em IEEE International
  Conference on Smart Grid Communications, 2011}, pp.~457--462.

\end{thebibliography}

\end{document}